\newcommand{\E}{{\rm E}}
\renewcommand{\Pr}{{\rm Pr}}
\newcommand{\Var}{{\rm Var}}
\newcommand{\Cov}{{\rm Cov}}
\newtheorem*{lemma*}{Lemma}
\journalname{ArXiv}
\begin{document}

\title{Estimation of the regression slope by means of Gini's cograduation index}


\titlerunning{Estimation of the regression slope by means of Gini's cograduation index}        

\author{D. Michele Cifarelli}


\institute{D. Michele Cifarelli \at
              Istituto di Scienze Statistiche e Demografiche \\
              Universit\`a di Pavia (Italy)\\
\emph{Present address:} \\
Department of Decision Sciences,
Bocconi University,   
via Roentgen 1,
20135 Milano (Italy)
\email{michele.cifarelli@unibocconi.it}          
}

\date{{\bf Translation from  Italian of the paper:} Cifarelli, D.M. (1978). ``La stima del coefficiente di regressione mediante l'indice di cograduazione di Gini", {\em Rivista di matematica per le scienze economiche e sociali} (now: {\em Decisions in economics and finance}),  1, 7--38.}


\maketitle

\begin{abstract}
The simple linear model 
$$Y_i = \alpha + \beta \, x_i + \epsilon_i \qquad i=1,2, \ldots,N \geq 2$$
is considered, where the $x_i$'s are given constants and $\epsilon_1, \epsilon_2 , \ldots, \epsilon_N$ are iid with continuous distribution function $F$. An estimator of $\beta$ is proposed, based on the stochastic process in (\ref{due}) and defined as $\tilde{\beta} = \frac 12 \, \left\{ \sup (b: G(\underline y;b) >0) + \right. $ 
$ \left.  \inf (b: G(\underline y;b) <0) \right\}.$ The properties of $\tilde{\beta}$ and of the related confidence interval are studied. Some comparisons are given, in terms of asymptotic relative efficiency, with other estimators of $\beta$ including that obtained with the method of least squares.    
\end{abstract}

\section{Introduction and summary}

Consider the simple linear model
$$
Y(x_i) = Y_i = 
\alpha + \beta \, x_i + \epsilon_i \qquad i=1,2, \ldots,N,$$
where
\begin{enumerate}
\item[a)] $x_1, x_2, \ldots , x_N$ are known constants, supposed to be all distinct and increasingly ordered
\item [b)] $\epsilon_1, \epsilon_2, \ldots, \epsilon_N$ are mutually independent random variables with the same distribution function $F$
\item [c)] $\alpha$ and $\beta$ are unknown parameters.
\end{enumerate}

The usual estimators of $\alpha$ and $\beta$ are those derived from the least squares method. As known, if the $\epsilon_i$'s have finite variance, such estimators possess some good properties. More specifically, they are unbiased and have minimum variance in the class of linear estimators (BLUE). When, in addition, the $\epsilon_i$'s are assumed to be normal, the above estimators coincide with the ones obtained by the maximum likelihood method and, besides being unbiased, they have minimum variance in the class of all unbiased estimators (MVUE) and they are normally distributed. 

Consider then the least squares estimator of $\beta$:
\begin{equation} \label{uno}
\hat{\beta} = \frac{\displaystyle{\sum_{1\leq i \leq N}} (x_i - \bar x) (Y_i - \bar Y)}{\displaystyle{\sum_{1\leq i \leq N}} (x_i - \bar x)^2} \qquad \bar Y = \frac 1N \sum_{1\leq i \leq N} Y_i, \quad \bar x = \frac 1N \sum_{1\leq i \leq N} x_i.
\end{equation} 
As the corresponding estimate of $\beta$ strongly depends on the observed values $y_1, y_2, \ldots , y_N,$ the occurrence of outliers, that is of observations deviating from the main core of data, will likely influence such a procedure. This chance will often arise when the distribution of the disturbances $\epsilon_i$ has heavy tails, like  in the case of the Cauchy, the double-exponential and other distributions. It is  quite a serious drawback of the estimator $\hat{\beta}$ and attempts are occasionally made to remedy it by unconventionally deleting the most extreme observations. 

Another completely different problem of least squares concerns the interval estimation of $\beta.$ The possibility of producing a confidence interval for $\beta,$ or equivalently of testing the hypothesis $\beta = \beta_0,$ rests indeed on the assumption of normality for the variables $\epsilon_i$'s, so that, at least for limited values of $N,$ the whole procedure proves to be fairly ``unrobust" when such an assumption is not met (even if the asymptotic normality of $\hat{\beta}$ is assumed). The asymptotic theory for such intervals cannot be always invoked, besides, for such a theory rests on the asymptotic normality of $\hat{\beta}$ which is not always ensured ([1]).  

Two distinct methods can be used to solve the first of the  problems above: two distinct ways can be tried: one can decide to delete outliers or, alternatively, to base the estimation of $\beta$ on suitable functions of ranks, which are possibly unaffected by the extreme observations. Common thinking is that the deletion of outliers must follow rules  that are clearly stated before, and not after, data are available; this task cannot then rely on a subjective judgment, which will deprive the researcher of any foundation to study the related procedure. The papers by Brown and Mood ([2]), Adichie ([3]), Theil ([4]) and Sen ([5]) are framed, instead, in the logic of ranks, which proved to be able to overcome both the drawbacks  outlined above.  
 
 To introduce such kinds of procedures, notice that the estimator (\ref{uno}) can be rewritten so that the slopes
 $$ P_{ij} = \frac{Y_j-Y_i}{x_j-x_i}, \qquad i<j, $$
 are explicitly shown. Indeed, 
 $$ \hat{\beta} = \frac{\displaystyle{\sum_{i<j}} P_{ij} \, (x_j - x_i)^2}{\displaystyle{\sum_{i<j}} (x_j-x_i)^2} = \frac{\displaystyle{\sum_{i<j}} (Y_j-Y_i) \, (x_j - x_i)}{\displaystyle{\sum_{i<j}} (x_j-x_i)^2}. $$
 The above equality shows that $\hat{\beta}$ can be regarded as a mean of the $P_{ij}$'s with weights $(x_j-x_i)^2.$ To solve the problem of outliers, one can then obviously substitute such a weighted mean with a suitable function of the slopes $P_{ij},$ so as to result unaffected (at least less affected) by the extreme observations. This approach is substantially the one used by Theil, who proposed, as an estimator of $\beta,$ the median of the slopes $P_{ij}$, or the central value of the median interval when dealing with an even number of slopes. Theil's procedure is related to the one by Sen, who derived an estimator of $\beta$ by using a measure of concordance, which is essentially Kendall's $\tau,$ between the ranks of $Y_i-bx_i$ and those of $x_i,$ $i=1, 2, \ldots , N.$ The obtained estimator is the same proposed by Theil, but it can be applied under the general assumption that the $x_i$'s are not all distinct.
It is interesting to note that the same result can be obtained by starting from a completely different point of view, namely by using the minimax estimator with a non-quadratic loss function ([6]). 
The study of the asymptotic properties of both the point and the interval estimators is due to Sen as well, along with the determination of the asymptotic relative efficiency of the proposed estimator with respect to the one of least squares and to other estimators, proposed by Adichie ([3]), which were generalized, somehow under a more general framework, by Koul ([7]).
To have an idea of the efficiency gained by the Theil-Sen estimator, $\beta^*,$ with respect to that of least squares, $\hat{\beta},$ it suffices to notice that there are cases where
$$ \lim_{N \rightarrow +\infty} \frac{\Var (\hat{\beta})}{\Var (\beta^*)} = \alpha > 1 $$
and that, even in the normal case, if the constants $x_i$'s  are conveniently chosen,  
$$ \lim_{N \rightarrow +\infty} \frac{\Var (\hat{\beta})}{\Var (\beta^*)} = \frac{3}{\pi} \simeq 0.95. $$
  
Instead of measuring the concordance between the residuals $Y_i-bx_i$ and $x_i,$ $i=1,2,\ldots, N,$ by means of $\tau$ or other indices, as later proposed ([8]), one can obviously consider Gini's cograduation index $G.$ This procedure is quite different from the one proposed by Adichie, who used a class of indices which are functions of the ranks of residuals $Y_i-bx_i$ and of the values $x_i,$  while $G$ is based, as known, on the  ranks of $Y_i-bx_i$ and on the {\em ranks} of  $x_i,$ $i=1, 2, \ldots, N.$ In addition, the results gained  using $G$ are likely to be structurally different from the ones obtained from $\tau$ or Spearman's $R,$ because $G$ is believed to locate some aspects of cograduation which neither $\tau$ nor $R$ can account for. This statement, in effect, is also confirmed by the fact that the correlation coefficient between $G$ and $\tau$ (or between $G$ and $R$), as shown in ([9]) and in ([10]), even though quite large for a limited value of $N$ (in absence of cograduation), never reaches one, not even asymptotically. 
  
 Let $\underline y$ be a realization of $\underline Y,$ $G(\underline y,b)$ be Gini's cograduation index computed from the residuals $y_i-bx_i$ and $x_i,$ $i=1,2,\ldots, N,$ and $\tilde{\beta} (\underline y)$ be the function of data obtained by making $G(\underline y;b)$ as close to zero as possible.\footnote{As later shown, $G(\underline y;b)$ is actually a non increasing step function; hence the stated condition does not imply that $\tilde{\beta}$ is a root of the equation $G(\underline y;b)=0,$ which might not admit any root.} $\tilde{\beta}(\underline y)$ can then be regarded as a minimum $G-$dependence estimate or, more correctly, as a maximum $G-$indifference estimate of $\beta.$ The same notation can be used for the estimator $\tilde{\beta}(\underline Y).$ This terminology is coherent with the term "indifference" proposed by Gini ([11], p. 330) to indicate the lack of concordance or discordance between  two rankings, in comparison with the term ``(stochastic) independence" which should instead be used to indicate lack of connection. Indeed, the two conditions (independence and indifference) are not equivalent, though W. Hoeffding  ([12], p. 555) showed that, under suitable assumptions, they imply each other. 
      
This paper aims at proposing the estimator $\tilde{\beta}$ and at analyzing its properties. In section 2, the main problem is framed and the stochastic process $G(\underline Y;b),$ whose properties are studied in section 3, is introduced. In section 4 the estimator $\tilde{\beta}$ is formally defined and some properties of its distribution are analyzed. In section 5 the task of building a confidence interval for $\beta$ is faced, for every sample size and independently of the distribution function of disturbances, $F,$ which will be exclusively assumed to be continuous. As the proposed estimator does not possess a closed form as a function of data, section 6 gives some hints to fasten its computation for a given sample realization. Section 7 deals with the asymptotic distribution of $\tilde{\beta}.$ Such distribution is closely related to the one of $G(\underline Y;b)$ whose analysis is rather long and hence is developed in the Appendix, to simplify the structure of the paper. Finally, section 8 focuses on the comparison, based on the asymptotic relative efficiency (ARE), of the estimator $\tilde{\beta}$ with the one of least squares and the one by Theil and Sen. The drawn conclusions are quite interesting, as the asymptotic  efficiency of $\tilde{\beta} $  relative to the other two estimators is shown to be (for the chosen values of the $x_i$'s) greater than 1 when the distribution has  tails heavier than the normal case; this fact recommends a wide use of $\tilde{\beta}$, even if its computation might seem somehow unpractical. 
     
\section{Problem settings}     
Let $Y_1, Y_2, \ldots, Y_n$ be $N$ mutually independent random variables with distribution functions
$$ P\left\{ Y_i \leq y \right\} = F_i(y) = F(y- \alpha - \beta \, x_i) \qquad i=1,2, \ldots, N; \, N \geq 2 $$
where $F$ is any continuous distribution function and $x_1 < x_2 < \cdots < x_N$ are known constants. As the main interest is the estimation of $\beta,$ in the following $\alpha = 0$ will be supposed, without loss of generality.

For every real $b,$ consider the new variables
$$Z_i(b) = Y_i - b \, x_i \qquad i =1,2, \ldots , N$$
and use them to build the function (of $b$)
\begin{equation} \label{due}
G(\underline Y;b) = \frac 2D \sum_{1 \leq i \leq N} \left\{ \left| N+1-i -R(Z_i(b)) \right| - \left| i - R(Z_i(b)) \right| \right\} \: \: b \in \Re - B 
\end{equation} 
with 
$$ B = \left\{ b: \: \: Y_i - b \, x_i = Y_j  - b \, x_j  \: \: \mbox{ for at least a couple }   i \neq j \right\}, $$
where $R(Z_i(b))$ denotes the rank of $Z_i(b)$ in the sorting of $\{  Z_1(b), \ldots , Z_N(b) \}$ and $D=N^2$ if $N$ is even or $D=N^2-1$ if $N$ is odd.      
     
The function $G(\underline Y;b)$  is not defined in the set $B,$ which is clearly finite. For every chosen $b \notin B,$ the function (\ref{due}) is the known Gini's cograduation index between $Y_1- b \, x_1, \ldots , Y_N - b \, x_N$ and $x_1, \ldots , x_N;$ conversely, as a function of $b,$ it is a stochastic process whose realizations correspond to the events $\omega \equiv (y_1, \ldots , y_N) \in A \subseteq \Re^N.$
     
As the random variables $Y_i - \beta \, x_i ,$ $i=1,2, \ldots, N,$  are iid, $G(\underline Y;\beta)$ has the known distribution of Gini's index in case of indifference and hence
$$ \E \left\{ G(\underline Y; \beta) \right\} = 0. $$  
One can then naturally estimate the parameter $\beta$ by making $G(\underline Y;b)$  as close to zero as possible, that is by letting
$$ G (\underline Y; b) \simeq 0 .$$
 Equivalently, the estimator proposed in this paper is a function $\tilde{\beta} = \tilde{\beta}(\underline Y)$ so that the sequence $Y_i-\tilde{\beta} \, x_i$ will result  as indifferent as possible to $x_i,$ $i=1, 2,\ldots , N.$ In effect, this is a natural requirement when considering that the least squares estimator can be regarded as a function $\hat{\beta} = \hat{\beta} (\underline Y)$ which makes the usual sample covariance between $Y_i - \hat{\beta} \, x_i$ and $x_i$ 
$$ \frac 1N \sum_{1 \leq i \leq N} \left( Y_i - \bar Y - \hat{\beta} \, (x_i - \bar x) \right) \, (x_i - \bar x) $$
vanish. Such a covariance plays then, in another framework, the same role of $G(\underline Y;\tilde{\beta}).$  
 
Of course, to implement the proposed procedure one must be sure that the obtained estimator is, in some sense, unique. This could be the case if the realizations of the process $G(\underline Y;b)$ resulted  strictly monotonic functions of $b.$ In the following section, such realizations are shown to be non increasing functions of $b.$ This fact implies that  a whole interval of values of $b$ may exist where $G(\underline Y; b)=0$ or, alternatively, two consecutive intervals $I_1$ and $I_2$ so that
$$ G(\underline Y;b)>0 \quad b \in I_1 \qquad \mbox{and} \qquad G(\underline Y;b)<0 \quad b \in I_2 .$$

\section{Properties of $G(\underline Y;b)$}

As claimed in the previous section, $G(\underline Y;b)$ is not defined for every real $b;$ more specifically, if $\underline y = (y_1, \ldots, y_N)$ is a realization of $\underline Y,$ $G(\underline y;b)$ turns out to be undefined in the set
$$ B = \left\{ b: \: \: y_i - b \, x_i = y_j  - b \, x_j  \: \: \mbox{ for at least a couple }   i \neq j \right\}, $$  
which will be referred to, in the following, as
$$ B = \left\{ b^{(1)}, b^{(2)}, \cdots , b^{(r)} \right\} \qquad 1 \leq r \leq {N \choose 2} \qquad b^{(1)} < b^{(2)} < \cdots < b^{(r)}.$$   
   
For any $n-$tuple $\underline y,$ the function $G(\underline y;b)$ is constant inside each interval
$$(-\infty,b^{(1)}), \: (b^{(1)},b^{(2)}), \: \cdots  \:, \: (b^{(r)}, +\infty) .$$
To prove such a claim, it suffices to show that, inside each of the  intervals above, $\{ R(Z_1(b)), \ldots , R_N(Z(b)) \}$ is the same permutation of the set of integers $\{ 1, 2,$  $\ldots , N \}.$

Let $b^{(i)} < b < b^{(i+1)},$ $i=1,\ldots, r-1.$ There are at least two couples of indices $(u_1,v_1)$ and $(u_2,v_2),$ with $u_1<v_1,$ $u_2<v_2,$ such that 
$$ b^{(i)} = \frac{y_{v_1} - y_{u_1}}{x_{v_1} - x_{u_1}}  \, < \, b \, < \, \frac{y_{v_2} - y_{u_2}}{x_{v_2} - x_{u_2}}  = b^{(i+1)} .$$
This fact implies that, for every $b$ belonging to the interval $\left( b^{(i)}, b^{(i+1)} \right),$
$$ R(Z_{v_1}(b)) < R(Z_{u_1}(b)) \qquad R(Z_{v_2}(b)) > R(Z_{u_2}(b)). $$
The former of the above inequalities holds equivalently for every couple whose slope is less than or equal to $b^{(i)}$; the latter inequality holds for those couples whose slope is greater  than or equal to $b^{(i+1)}.$ This remark shows that, for every $b$ belonging to the considered interval, the permutation taken by $\{ R(Z_1(b)), \ldots ,$ $R_N(Z(b)) \}$ does not change, which suffices to state that $G(\underline y;b)$ does not change its value. The same conclusions can be drawn when considering the first and the last intervals for $b.$ Specifically, as
$$ 
\begin{array}{rcll}
\{ R(Z_1(b)), \ldots , R_N(Z(b)) \} &=& \{ 1,2, \ldots , N \} & \quad \forall b < b^{(1)} \\
\{ R(Z_1(b)), \ldots , R_N(Z(b)) \} &=& \{ N, N-1,\ldots , 1 \} & \quad \forall b > b^{(r)} ,
\end{array} $$
one gets
$$ G(\underline y;b) = 1 \: \: \forall b < b^{(1)} \qquad \mbox{and} \qquad G(\underline y;b) = -1 \: \: \forall b > b^{(r)} .$$

Obviously the definition of $G(\underline Y;b)$ can be supplemented by setting
$$ G(\underline y; b^{(i)}) = \lim_{b \rightarrow b^{(i)+}} \, \, G(\underline y;b) ,$$
so as to let every realization of the process be right continuous. In the following, $G(\underline Y;b)$ will be supposed to be defined for every real $b.$

Consider now two adjacent intervals ${\cal I}_1 = [ b^{(s)}, b^{(s+1)} )$ and ${\cal I}_2 = [ b^{(s+1)},$ $ b^{(s+2)} )$ and let $b \in {\cal I}_1,$ $b_1 \in {\cal I}_2.$ When shifting from $b$ to $b_1,$ the above discussion shows that the ranks of $Z_i(b)$ will be only partially modified. Specifically, suppose that 
$$ b^{(s+1)} = \frac{y_{v_1} - y_{v_0}}{x_{v_1} - x_{v_0}} =  \frac{y_{v_2} - y_{v_0}}{x_{v_2} - x_{v_0}} = \cdots =  \frac{y_{v_m} - y_{v_0}}{x_{v_m} - x_{v_0}} $$
with 
$$v_0 < v_1 < \cdots <v_m \qquad \mbox{and} \qquad m \geq 1 ,$$
which means that the observations $y_{v_0}, y_{v_1} , \ldots , y_{v_m}$ lie on the same straight line. When shifting from $b$ to $b_1,$ only the ranks of $Z_{v_0}, \ldots , Z_{v_m}$ will be modified, that is
\begin{equation} \label{tre}
R(Z_{v_i}(b_1)) = R(Z_{v_i}(b)) + m -2i \qquad i =0,1,\ldots, m;
\end{equation}
furthermore
\begin{equation} \label{quattro}
R(Z_{v_i}(b)) = R(Z_{v_0}(b)) +i \qquad i =0,1,\ldots, m.
\end{equation}
The above equalities derive immediately after considering that the rank of the generic $Z_j(b)$ equals the number of observations $y_j$ which lie under or on the straight line with slope $b$ passing through $(x_j,y_j).$ 

\begin{lemma*}
If $1 \leq v_0 < v_1 < \cdots < v_m,$ $m \geq 0,$ then
$$ \sum_{ 1 \leq i \leq m} \left| v_i - m +i - \xi \right| \:  \geq \: \sum_{1 \leq i \leq m} \left| v_i - i - \xi \right| \qquad \forall \xi \in \Re. $$
\end{lemma*}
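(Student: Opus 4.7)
The plan is to reduce the claimed inequality to an elementary two-point spreading estimate via a pairing argument on the indices $i$ and $m-i$. First I would remark that the stated range $1\le i\le m$ must be read as $0\le i\le m$: this matches the range appearing in (\ref{tre}) and (\ref{quattro}) where the lemma is to be applied, and with the $i=0$ term omitted the bound already fails for $m=1$, $\xi=v_1$. With this correction each side has $m+1$ terms, and a useful preliminary is that the two underlying sequences have equal total, since $\sum_{i=0}^{m}\bigl[(v_i+i-m)-(v_i-i)\bigr]=\sum_{i=0}^{m}(2i-m)=0$.

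For the pairing, I would group $\{0,1,\ldots,m\}$ into the unordered pairs $\{i,m-i\}$ with $0\le i<m/2$, leaving the singleton $i=m/2$ apart when $m$ is even (which contributes identically to both sides since then $m-2i=0$). Fixing such a pair, set $a=v_i-i$, $b=v_{m-i}-(m-i)$, and $D=m-2i\ge0$; the map $j\mapsto v_j-j$ is non-decreasing because $v_{j+1}\ge v_j+1$, so $a\le b$. Using the identities $v_i+i-m=a-D$ and $v_{m-i}+(m-i)-m=b+D$, the contributions of this pair to the right- and left-hand sides of the lemma are $|a-\xi|+|b-\xi|$ and $|(a-D)-\xi|+|(b+D)-\xi|$ respectively.

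It then suffices to establish the elementary two-point bound $|(a-D)-\xi|+|(b+D)-\xi|\ge|a-\xi|+|b-\xi|$ for every $a\le b$, $D\ge0$ and $\xi\in\Re$; summing it over all pairs (and adding the trivial middle term when $m$ is even) yields the lemma. I would dispatch this auxiliary inequality by a short case analysis on whether $\xi$ lies below $a$, between $a$ and $b$, or above $b$; equivalently, one can observe that the LHS is a convex function of $D$ whose right derivative at $D=0$ equals $-\operatorname{sign}(a-\xi)+\operatorname{sign}(b-\xi)\ge0$ in each of those three cases. The main obstacle is purely bookkeeping: identifying the correct pairing and verifying the identity $v_i+i-m=(v_i-i)-(m-2i)$ that decouples the sum into independent two-point problems; once these are in hand, the content of the bound is the familiar observation that symmetrically spreading two points about their midpoint cannot decrease their $\ell^1$-distance to any third point.
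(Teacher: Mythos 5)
Your proof is correct and follows essentially the same route as the paper: the paper's own proof also sums over $0\le i\le m$ (so your index-range correction matches it), pairs the index $i$ with $m-i$, and verifies by an elementary case analysis that each pair contributes nonnegatively, which is exactly your two-point spreading inequality in the variables $a=v_i-i$, $b=v_{m-i}-(m-i)$, $D=m-2i$. The only minor difference is that your step $a\le b$ invokes the unit gaps $v_{j+1}\ge v_j+1$ (integrality of the $v_j$, harmless here since they are indices), whereas the paper's tabulation of $\phi_i+\phi_{m-i}$ needs only $v_i\le v_{m-i}$.
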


\begin{proof}
Consider the functions
$$ \phi_i (\xi) = \left| v_i - m + i -\xi \right| - \left| v_i - i - \xi \right| \qquad i =0,1, \ldots , m $$
and define
$$ \phi (\xi) = \sum_{0 \leq i \leq m} \phi_i(\xi) =  \sum_{0 \leq i \leq [m/2]} \left( \phi_i(\xi) + \phi_{m-i} (\xi) \right). $$
For every $0 \leq i \leq [m/2],$ one gets
$$ \begin{array}{rcl}
\phi_i(\xi) &=& 
\left\{ \begin{array}{l@{\qquad}l}
-(m-2i) \leq 0 & \xi< v_i -m+i \\
2\xi-2v_i+m & v_i -m +i \leq \xi <v_i -i \\
m-2i \geq 0 & \xi \geq v_i -i 
\end{array} \right. \\ \\
\phi_{m-i} (\xi) &=&
\left\{ \begin{array}{l@{\qquad}l}
m -2i \geq 0 & \xi < v_{m-i} - m + i \\
2v_{m-i} - m - 2 \xi & v_{m-i} - m + i \leq \xi < v_{m-i} - i \\
-(m-2i) \leq 0 & \xi \geq v_{m-i} -i 
\end{array} \right.
\end{array} $$
hence
$$ \phi_i(\xi) + \phi_{m-i} (\xi) \geq 0 \qquad \forall \xi \in \Re, $$
from which the proof follows. $\qed$
\end{proof}

The following theorem can now be stated.

\begin{theorem}
For every $n-$tuple $\underline y,$ the function $G(\underline y;b)$ is non increasing.
\end{theorem}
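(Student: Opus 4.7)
The plan is to exploit the piecewise-constant structure already established: since $G(\underline y; b)$ is constant on each open interval between consecutive breakpoints $b^{(s)}$ and right-continuous by convention, it suffices to verify that $G(\underline y; b_1) \le G(\underline y; b)$ whenever $b \in [b^{(s)}, b^{(s+1)})$ and $b_1 \in [b^{(s+1)}, b^{(s+2)})$. (Monotonicity across the extreme intervals follows from $G \equiv 1$ on the leftmost and $G \equiv -1$ on the rightmost, which is already the extreme case.)

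At the single crossing of $b^{(s+1)}$, equations (\ref{tre}) and (\ref{quattro}) identify exactly which ranks move: only the indices $v_0 < v_1 < \cdots < v_m$ corresponding to points on the common line of slope $b^{(s+1)}$ see their ranks change, and these ranks merely \emph{reverse} inside the block $\{k, k+1, \ldots, k+m\}$, where $k := R(Z_{v_0}(b))$. So writing $f(j,r) := |N{+}1 - j - r| - |j - r|$, the difference $\tfrac{D}{2}\bigl[G(\underline y; b_1) - G(\underline y; b)\bigr]$ equals
\begin{equation*}
\sum_{i=0}^{m}\bigl[ f(v_i, k{+}m{-}i) - f(v_i, k{+}i)\bigr]
 = \Bigl\{\sum_{i=0}^m |N{+}1{-}v_i{-}k{-}m{+}i| - |N{+}1{-}v_i{-}k{-}i| \Bigr\}
 + \Bigl\{\sum_{i=0}^m |v_i{-}k{-}i| - |v_i{-}k{-}m{+}i| \Bigr\}.
\end{equation*}

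The second brace is $\le 0$ by a direct application of the Lemma with $\xi = k$ to the increasing sequence $v_0 < \cdots < v_m$. For the first brace I would reindex by $j = m - i$ and set $u_j := N{+}1 - v_{m-j}$; since $v_{m-j}$ is decreasing in $j$, the sequence $u_0 < u_1 < \cdots < u_m$ is still strictly increasing, and a short calculation rewrites the first brace as $\sum_{j=0}^m |u_j - j - k| - \sum_{j=0}^m |u_j - m + j - k|$, which is $\le 0$ by the Lemma applied to $(u_j)$ with $\xi = k$. Adding the two nonpositive contributions gives $G(\underline y; b_1) \le G(\underline y; b)$.

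The computation is essentially bookkeeping; the one place that needs care is recognizing that the ``$N{+}1 - v_i$" piece is handled by the same Lemma after a reflection of the indices, since the Lemma's hypothesis requires an \emph{increasing} sequence and the naive substitution leaves a decreasing one. Once that reindexing is in place, the rest follows mechanically.
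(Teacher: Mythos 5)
Your proof is correct and takes essentially the same route as the paper: the same rank-reversal facts (\ref{tre})--(\ref{quattro}) and the same Lemma, applied once directly (the paper's ${\cal U}$ part) and once after the reflection $u_j = N+1-v_{m-j}$ with $\xi=k$ (the paper's ${\cal V}$ part, which it dismisses as ``follows similarly''). The only detail you gloss over is that several distinct collinear groups may share the same critical slope $b^{(s+1)}$ (the paper's sum over $t=1,\ldots,k$); since the contributions add and each block is handled by your computation verbatim, this changes nothing.
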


\begin{proof}
It suffices to prove that the function
$${\cal U}(b) = \sum_{1 \leq i \leq N} \left| i - R(Z_i(b)) \right| $$
is non decreasing and that the function
$${\cal V}(b) = \sum_{1 \leq i \leq N} \left| N+1 - i - R(Z_i(b)) \right| $$
is non increasing. Only the statement for ${\cal U}(b)$ will be proved; the one for ${\cal V}(b)$ follows similarly. Suppose that $b^{(s)} < b < b^{(s+1)},$ $b^{(s+1)} < b_1 < b^{(s+2)},$ with $b^{(0)} = -\infty,$ $b^{(r+1)} = + \infty,$
$$
b^{(s+1)} = \frac{y_{v_{i,t}} - y_{v_{0,t}}}{x_{v_{i,t}} - x_{v_{0,t}}} \qquad i = 1,2,\ldots, m_t \qquad t=1,2,\ldots ,k \geq 1 $$
and $v_{0,t} < v_{1,t} < \cdots < v_{m,t}.$ When shifting from $b$ to $b_1,$ only the ranks of
\begin{align*} 
\left\{ Z_{v_{0,1}}(b), \ldots , Z_{v_{m_1,1}}(b) \right\} \: , \:
\left\{ Z_{v_{0,2}}(b), \ldots , \right. & \left. Z_{v_{m_2,2}}(b) \right\}, \:   \ldots \: , \\
& \left\{ Z_{v_{0,k}}(b), \ldots , Z_{v_{m_k,k}}(b) \right\} 
\end{align*} 
will change; hence 
\begin{eqnarray*}
\lefteqn{{\cal U}(b_1) -  {\cal U}(b) = } \\
&& = \sum_{1 \leq t \leq k} \left\{ \sum_{0 \leq i \leq m_t}
\left| v_{i,t} - R(Z_{v_{i,t}} (b_1)) \right| - 
\sum_{0 \leq i \leq m_t}
\left| v_{i,t} - R(Z_{v_{i,t}} (b)) \right|
\right\} .
\end{eqnarray*}
By (\ref{tre}) and (\ref{quattro}),
\begin{align*}
 {\cal U}(b_1) -  {\cal U}(b) & =  
\sum_{1 \leq t \leq k} \left\{ \sum_{0 \leq i \leq m_t}
\left| v_{i,t} - m_t + i - R(Z_{v_{0,t}} (b)) \right| + \right. \\
& \quad \left. - \sum_{0 \leq i \leq m_t}
\left| v_{i,t} - i - R(Z_{v_{0,t}} (b)) \right|
\right\} .
\end{align*}
According to the Lemma stated above, the quantity in brackets is non negative for every value of $R(Z_{v_{0,t}}(b))$ and hence
$${\cal U}(b_1) \geq {\cal U}(b), $$
which holds for all intervals and thus gives the proof. $\qed$
\end{proof}

\section{Definition of the point estimator and related properties}
Section 3 showed that all trajectories of the stochastic process $G(\underline Y;b)$ are non increasing functions of $b$ and that
$$ G(\underline y;b) = 1 \quad \forall b < b^{(1)}; \qquad  \qquad G(\underline y;b) = -1 \quad \forall b > b^{(r)} .$$
For any observed $n-$tuple $\underline y,$ the following two cases will then arise:
\begin{itemize}
\item[a)]
 a whole interval of $b$ exists where $G(\underline y;b)=0$ pointwise
 \item[b)]
  two adjacent intervals exist such that
 $$\begin{array}{rl}
 G(\underline y;b) >0 & \qquad  \mbox{in the first interval} \\
 G(\underline y;b) <0 & \qquad  \mbox{in the second interval.}
 \end{array} $$
\end{itemize}
When in case a), one could choose the central value of the interval as an estimate of $\beta;$ in case b), the value
$$ \sup \left\{ b: \: \: G(\underline y;b)>0 \right\} = \inf \left\{ b: \: \: G(\underline y;b)<0 \right\} $$
could instead be chosen. 
The two cases can then be unified by defining the following maximum $G-$indifference estimator:
\begin{equation} \label{cinque}
\tilde{\beta} = \tilde{\beta}(\underline y) = \frac 12 \left[ \sup \left\{ b: \:  G(\underline Y;b)>0 \right\} + \inf \left\{ b:  \: G(\underline Y;b)<0 \right\} \right]
\end{equation}
which is similar to the estimator proposed in ([13]) for the location parameter. One of the following sections will show how to get a fast computation of $\tilde{\beta}(\underline y).$ First of all, the next propositions will give three quite immediate properties of the distribution of $\tilde{\beta}.$  

\begin{proposition}
The distribution of $\tilde{\beta} - \beta$ does not depend on the parameter $\beta.$
\end{proposition}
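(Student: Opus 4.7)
The plan is to exhibit a translation-equivariance property of the functional $G(\,\cdot\,;\,\cdot\,)$ and then propagate it through the definition of $\tilde{\beta}$ in (\ref{cinque}).

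First I would rewrite the residuals. Using $Y_i = \beta x_i + \epsilon_i$ (recall $\alpha=0$ has been assumed without loss of generality), one has
$$ Z_i(b) = Y_i - b x_i = \epsilon_i - (b-\beta)\, x_i. $$
Hence the multiset of residuals $\{Z_1(b),\ldots,Z_N(b)\}$ coincides with the multiset of residuals that one would form from the $\epsilon_i$'s at the shifted slope $c := b-\beta$. Since the ranks $R(Z_i(b))$ entering (\ref{due}) are determined solely by the ordering of these residuals, it follows that
$$ G(\underline Y;\,b) \;=\; G(\underline \epsilon;\, b-\beta) \qquad \text{for every } b \text{ with } b-\beta \notin B(\underline\epsilon), $$
where $B(\underline\epsilon)$ is the (finite) exceptional set defined as in Section 2 but for the $\epsilon_i$'s. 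In other words, the random function $b\mapsto G(\underline Y;b)$ is just the horizontal translate by $\beta$ of the random function $c\mapsto G(\underline \epsilon;c)$; the exceptional sets are translated accordingly.

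Next I would push this equivariance through the definition (\ref{cinque}). Because $\{b:G(\underline Y;b)>0\}=\{c+\beta : G(\underline \epsilon;c)>0\}$ and analogously with ``$<0$",
\begin{align*}
\tilde\beta(\underline Y) &= \tfrac{1}{2}\Bigl[\sup\{b: G(\underline Y;b)>0\} + \inf\{b: G(\underline Y;b)<0\}\Bigr] \\
&= \beta + \tfrac{1}{2}\Bigl[\sup\{c: G(\underline \epsilon;c)>0\} + \inf\{c: G(\underline \epsilon;c)<0\}\Bigr] \\
&= \beta + \tilde\beta(\underline \epsilon).
\end{align*}
Therefore $\tilde\beta(\underline Y) - \beta = \tilde\beta(\underline \epsilon)$, a functional of the iid sample $(\epsilon_1,\ldots,\epsilon_N)$ (and of the known constants $x_1,\ldots,x_N$), which plainly has a distribution that does not involve $\beta$.

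There is no substantial obstacle here; the only point worth a line of care is the legitimacy of the change of variable $c=b-\beta$ inside $\sup$ and $\inf$, which is immediate since translation by $\beta$ is a bijection of $\mathbb R$ that preserves order. One should also briefly remark that the translation carries the exceptional set $B$ to $B(\underline\epsilon)+\beta$ in the obvious way, so the right-continuous extension of $G$ introduced in Section 3 is preserved by the translation, and hence the suprema and infima in (\ref{cinque}) transform as claimed.
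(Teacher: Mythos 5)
Your proof is correct and follows essentially the same route as the paper: you establish the translation equivariance $G(\underline Y;b)=G(\underline\epsilon;b-\beta)$ (the paper phrases it as $\tilde{\beta}_k(Y_1-\beta x_1,\ldots,Y_N-\beta x_N)=\tilde{\beta}_k(\underline Y)-\beta$ for the sup and inf functionals), push it through definition (\ref{cinque}) to get $\tilde{\beta}(\underline Y)-\beta=\tilde{\beta}(\underline\epsilon)$, and conclude because the law of the $\epsilon_i$'s does not involve $\beta$. Your added remarks on the shifted exceptional set and the order-preserving change of variable are fine but not a substantive departure.
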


\begin{proof}
Let 
$$ \tilde{\beta}_1 (Y_1-\beta \, x_1, \ldots , Y_N-\beta \, x_N) = \sup \left\{ b: \: G(Y_1 - \beta \, x_1, \ldots , Y_N - \beta \, x_N; \, b) >0 \right\} .$$ 
From the definition of $G(\underline Y;b),$
$$ \tilde{\beta}_1 (Y_1-\beta \, x_1, \ldots , Y_N-\beta \, x_N)  = \tilde{\beta}_1(\underline Y) - \beta .$$
Similarly, if 
$$ \tilde{\beta}_2 (Y_1-\beta \, x_1, \ldots , Y_N-\beta \, x_N) = \inf \left\{ b: \: G(Y_1 - \beta \, x_1, \ldots , Y_N - \beta \, x_N; \, b) <0 \right\} ,$$
one gets
$$ \tilde{\beta}_2 (Y_1-\beta \, x_1, \ldots , Y_N-\beta \, x_N)  = \tilde{\beta}_2(\underline Y) - \beta $$
and, by definition (\ref{cinque}),
$$ \tilde{\beta}(Y_1-\beta \, x_1, \ldots , Y_N-\beta \, x_N)  = \tilde{\beta}(\underline Y) - \beta .$$
However, the lhs of the above equality is a function of the variables $Y_1-\beta \, x_1, \ldots , Y_N-\beta \, x_N,$ whose distributions, by hypothesis, do not depend on $\beta.$ $\qed$
\end{proof}

Proposition 1 equivalently states that, if $P\left( \tilde{\beta} \leq b \right) = \psi(b; \beta),$ then $\psi(b; \beta)=\varphi(b-\beta),$ namely $\beta$ is a location parameter of the distribution of $\tilde{\beta}.$ This fact will allows setting $\beta = 0$ in the following, without loss of generality. 

\begin{proposition}
$\tilde{\beta}$ has a continuous distribution.
\end{proposition}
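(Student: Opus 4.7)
The plan is to show that $\tilde{\beta}$ always takes the form $\tfrac{1}{2}(P_{ij}+P_{kl})$ for some (random) index pairs $i<j$ and $k<l$, where $P_{\mu\nu}=(Y_\nu-Y_\mu)/(x_\nu-x_\mu)$, and then apply a union bound to the finitely many such pairs. From Section~3 the trajectory $b\mapsto G(\underline Y;b)$ is a right-continuous, non-increasing step function, equal to $+1$ for $b<b^{(1)}$ and to $-1$ for $b>b^{(r)}$, with jumps confined to the finite set $B$. Consequently $\{b:G(\underline Y;b)>0\}$ is an interval of the form $(-\infty,b^{(p)})$ and $\{b:G(\underline Y;b)<0\}$ is an interval of the form $[b^{(q)},+\infty)$ for some (random) indices $p\leq q$, and definition~(\ref{cinque}) gives $\tilde{\beta}=\tfrac{1}{2}(b^{(p)}+b^{(q)})$. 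Since every element of $B$ equals one of the slopes $P_{\mu\nu}$, the asserted representation follows.

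By Proposition~1 I may take $\beta=0$, so that $Y_1,\ldots,Y_N$ become i.i.d.\ with continuous c.d.f.\ $F$. For any fixed $c\in\Re$,
\[
P(\tilde{\beta}=c)\;\leq\;\sum_{i<j}\,\sum_{k<l}\, P\!\left(\tfrac{1}{2}(P_{ij}+P_{kl})=c\right).
\]
Each summand is the probability that a specific linear combination of $Y_1,\ldots,Y_N$ equals $c$, and in each such combination at least one $Y_m$ carries a non-zero coefficient: this is immediate when $(i,j)=(k,l)$, and otherwise any index belonging to exactly one of the two pairs works, since its coefficient is $\pm 1/(2(x_\nu-x_\mu))\neq 0$. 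Conditioning on the remaining $Y$'s then reduces each summand to the $F$-mass of a single point, which vanishes by continuity of $F$. Hence $P(\tilde{\beta}=c)=0$ for every $c\in\Re$, which is the desired continuity.

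The main point to secure with care is the structural reduction $\tilde{\beta}=\tfrac{1}{2}(b^{(p)}+b^{(q)})$ with $b^{(p)},b^{(q)}\in B$: it rests on the right-continuity convention adopted in Section~3 together with the monotonicity of the trajectories proved there. Once the representation is in hand, the probabilistic tail of the argument is the routine observation that a non-trivial linear combination of independent continuous random variables is itself continuous.
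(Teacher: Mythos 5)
Your proof is correct, and it shares the paper's core idea: the critical values at which $G(\underline Y;\cdot)$ changes sign are jump points of the trajectory, hence lie in the finite random set of slopes $P_{ij}=(Y_j-Y_i)/(x_j-x_i)$, each of which is a continuous random variable, so a union bound kills all atoms. The difference is in how the averaging is handled. The paper proves separately that $\tilde{\beta}_1=\sup\{b:G(\underline Y;b)>0\}$ and $\tilde{\beta}_2=\inf\{b:G(\underline Y;b)<0\}$ have continuous distributions (each via a union bound over single pairs $i<j$) and then simply asserts that continuity of these two components carries over to $\tilde{\beta}=\frac12(\tilde{\beta}_1+\tilde{\beta}_2)$; as stated, that last step is loose, since marginal continuity of two random variables does not in general imply continuity of their sum (think of $X$ and $-X$). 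You instead establish the structural representation $\tilde{\beta}=\frac12\bigl(b^{(p)}+b^{(q)}\bigr)$ with both endpoints in the slope set and run a double union bound over pairs of pairs, checking that each candidate value $\frac12(P_{ij}+P_{kl})$ is a non-degenerate linear combination of independent continuous variables (there is always an index in exactly one of the two pairs, so some $Y_m$ has nonzero coefficient, and conditioning on the others gives zero probability). This treats the sum directly and thereby closes the small gap in the paper's final step, at the cost of a slightly longer bookkeeping argument; your verification that the positivity set is $(-\infty,b^{(p)})$ and the negativity set is $[b^{(q)},+\infty)$ under the right-continuity convention is exactly the monotone-step-function structure the paper relies on.
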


\begin{proof}
It suffices to prove that the two variables
\begin{align*}
\tilde{\beta}_1 (\underline Y) = & \sup \{ b: G(\underline Y;b) >0 \} \\
\tilde{\beta}_2 (\underline Y) = & \inf \{ b: G(\underline Y;b) <0 \} 
\end{align*}
are both continuous. The continuity of $\tilde{\beta}_1$ and $\tilde{\beta}_2,$ indeed, implies that the joint distribution of $(\tilde{\beta}_1, \tilde{\beta}_2)$ is continuous and similarly for $\tilde{\beta} = \frac 12 \left( \tilde{\beta}_1 + \tilde{\beta}_2 \right).$

As every realizations of $G(\underline Y;b)$ is non-increasing with at least a jump at a point of the form $(Y_j-Y_i)/(x_j-x_i),$ $i \neq j,$ the event $\tilde{\beta} = a \in \Re$ implies that $G(\underline Y;b)$ has a jump at $a.$ Hence
\begin{align*}
P \left\{ \tilde{\beta}_1 = a \right\} & \leq  P \left\{ \frac{Y_j-Y_i}{x_j-x_i} = a \quad \mbox{ for at least a couple } (i<j) \right\} \\
& \leq  \sum_{i<j} \, P \left\{ \frac{Y_j-Y_i}{x_j-x_i} = a  \right\} 
\end{align*}
As, by hypothesis, the variables $Y_k$'s are continuous (and independent), the same is true for the variables $(Y_j-Y_i)/(x_j-x_i);$ hence $P(\tilde{\beta}_1 = a) =0.$ Similarly, one can show that  $P(\tilde{\beta}_2 = a) =0.$   $\qed$
\end{proof}

Notice that, by following the same steps as for the proof of Proposition 2, a similar result can be obtained for the variables
\begin{align*}
\tilde{\beta}_I (\underline Y) = & \inf \{ b: G(\underline Y;b) <G^* \} \\ 
\tilde{\beta}_S (\underline Y) = & \sup \{ b: G(\underline Y;b) >-G^*\} 
\end{align*}
where $G^*>0$ is a given constant.

Before stating another property concerning the distribution of $\tilde{\beta},$ the following equality should be considered:
\begin{equation} \label{sei}
G(-\underline y;b) = - G(\underline y;-b) \qquad \forall \, \underline y \subseteq \Re^N \mbox{ and } \forall \, b \in \Re. 
\end{equation}
Indeed, 
\begin{align*}
G(- \underline y;b) & = \frac 2D \, \left\{ \sum_i \left| N+1-i-R(-y_i+b\, x_i) \right| - \sum_i \left| i - R(-y_i+b\, x_i) \right| \right\} \\
& = \frac 2D \, \left\{ \sum_i \left| N+1-i-R[-(y_i-b\, x_i)] \right| - \sum_i \left| i - R[-(y_i-b\, x_i)] \right| \right\}  
\end{align*}
from which the above result follows, as it is obviously
$$R[-(y_i-b\, x_i)] = N+1 -  R(y_i-b\, x_i). $$

\begin{proposition}
If $Y_1, Y_2, \ldots , Y_n$ are symmetrically distributed, 
$$ \E (\tilde{\beta}) = \beta \qquad \forall \, N \geq 2 \quad \forall \, \beta \in \Re .$$
\end{proposition}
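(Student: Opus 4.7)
The plan is to reduce to the centred case via Proposition 1, and then exploit the identity (\ref{sei}) together with the symmetry assumption on $F$ to show that $\tilde\beta$ is symmetrically distributed around $0$, whence $\E(\tilde\beta)=0=\beta$. First I would invoke Proposition 1 to set $\beta=0$ without loss of generality, so that the hypothesis becomes: $Y_1,\dots,Y_N$ are i.i.d.\ with a symmetric continuous distribution $F$, and I must prove $\E(\tilde\beta)=0$.

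The central step is to show that $\tilde\beta(-\underline Y)=-\tilde\beta(\underline Y)$ identically in $\underline Y\in\Re^N$. For this I would apply (\ref{sei}) to the two auxiliary statistics of Proposition 2:
\begin{align*}
\tilde\beta_1(-\underline Y) &= \sup\{b: G(-\underline Y;b)>0\} = \sup\{b: G(\underline Y;-b)<0\} \\
 &= -\inf\{c: G(\underline Y;c)<0\} = -\tilde\beta_2(\underline Y),
\end{align*}
and an entirely analogous calculation yields $\tilde\beta_2(-\underline Y)=-\tilde\beta_1(\underline Y)$. Substituting into the definition (\ref{cinque}) gives
$$\tilde\beta(-\underline Y)=\tfrac12\bigl(\tilde\beta_1(-\underline Y)+\tilde\beta_2(-\underline Y)\bigr)=-\tfrac12\bigl(\tilde\beta_1(\underline Y)+\tilde\beta_2(\underline Y)\bigr)=-\tilde\beta(\underline Y).$$

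To conclude, I would note that symmetry of $F$ together with $\beta=0$ implies $-\underline Y\stackrel{d}{=}\underline Y$, so the random variables $\tilde\beta(\underline Y)$ and $\tilde\beta(-\underline Y)=-\tilde\beta(\underline Y)$ have the same distribution. Hence the law of $\tilde\beta$ is symmetric around $0$, and combined with the continuity established in Proposition 2 this yields $\E(\tilde\beta)=0$ whenever the expectation is defined. The main obstacle I foresee is only bookkeeping — namely getting the signs right when the supremum and the infimum are swapped by the change of variable $b\mapsto -b$ — together with a brief remark that $\tilde\beta$ lies between $\min_{i<j}P_{ij}$ and $\max_{i<j}P_{ij}$, which, under the symmetry hypothesis, guarantees the existence of $\E(\tilde\beta)$ as the integral of an odd (and integrable, by symmetric cancellation) function.
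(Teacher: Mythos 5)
Your argument is correct and is essentially the paper's own proof: reduce to $\beta=0$ by Proposition 1, use the identity (\ref{sei}) to show $\tilde\beta(-\underline Y)=-\tilde\beta(\underline Y)$ (the paper performs this sign computation directly on the sup and inf in (\ref{cinque}), which is the same calculation you carry out through $\tilde\beta_1,\tilde\beta_2$), and conclude from $-\underline Y \stackrel{d}{=} \underline Y$ under symmetry that the law of $\tilde\beta$ is symmetric about $0$, hence centred at $\beta$. Only your closing aside is shaky---symmetric cancellation does not by itself guarantee integrability of $\tilde\beta$ (a bound by $\max_{i<j}|P_{ij}|$ need not be integrable for heavy-tailed $F$)---but the paper likewise leaves existence of the expectation implicit, so this does not affect the comparison.
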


\begin{proof}
According to Proposition 1, it can be assumed that $\beta =0.$ Now notice that
\begin{equation} \label{sette}
\tilde{\beta} (- \underline Y ) = - \tilde{\beta} (\underline Y)
\end{equation}
Indeed, 
$$ \tilde{\beta} (- \underline Y ) = \frac 12 \left[ \sup \left\{ b: \, G(-\underline Y;b) >0 \right\} + 
 \inf \left\{ b: \, G(-\underline Y;b) <0 \right\} \right] $$
 and, by (\ref{sei}),
 \begin{align*}
 \tilde{\beta} (- \underline Y ) &= \frac 12 \left[ \sup \left\{ b: \, G(\underline Y;-b) <0 \right\} + 
 \inf \left\{ b: \, G(\underline Y;-b) >0 \right\} \right] = \\
 &= \frac 12 \left[ -\inf \left\{ b: \, G(\underline Y;b) <0 \right\} - 
 \sup \left\{ b: \, G(\underline Y;b) >0 \right\} \right] = \\
 & = -\tilde{\beta} ( \underline Y )
 \end{align*}
 By the symmetry of $Y_1, Y_2, \ldots , Y_N,$ the variables $\tilde{\beta} ( \underline Y )$ and $\tilde{\beta} ( -\underline Y )$ share the same distribution. By using (\ref{sette}) and this latter property, one can then claim that $\tilde{\beta} ( \underline Y )$ has a distribution symmetric around zero (which is the value of $\beta$); this fact completes the proof. $\qed$
 \end{proof}

\section{Confidence intervals for $\beta$}
As the variables $Y_i - \beta \, x_i,$ $i=1,2, \ldots , N$ are iid, $G(\underline Y ; \beta)$ has the known distribution of Gini's cograduation index under indifference. There exist quite complete tables of such a distribution.  By using these tables, a constant $G^*>0$ such that, for a suitable $\alpha,$
$$ P \{ - G^* < G(\underline Y;\beta) < G^* \} = 1-\alpha \qquad 0 < \alpha < 1 $$
can be easily determined. Consider now the variables
\begin{align}
\tilde{\beta}_I (\underline Y) = & \inf \{ b: G(\underline Y;b) <G^* \} \label{otto} \\ 
\tilde{\beta}_S (\underline Y) = & \sup \{ b: G(\underline Y;b) >-G^* \label{nove} \} .
\end{align}
From (\ref{otto}), as $G(\underline y;b)$ is non increasing, 
$$ \inf \{ b: \, G(\underline y;b) < G^* \} < \beta \, \, \Rightarrow \, \, G(\underline y;\beta) < G^* \,\, \Rightarrow \, \, 
\inf \{ b: \, G(\underline y;b) < G^* \} \leq \beta $$
Similarly, from (\ref{nove}),
$$ \sup \{ b:  G(\underline y;b) > -G^* \} > \beta   \Rightarrow \,  G(\underline y;\beta) > - G^* \Rightarrow \,  \sup \{ b:  G(\underline y;b) > -G^* \} \geq \beta $$
It follows that
\begin{equation} \label{dieci}
\{ \tilde{\beta}_I < \beta <  \tilde{\beta}_S \} \,\,   \Rightarrow \, \, \{ - G^* <  G(\underline Y;\beta) <  G^* \} \, \, \Rightarrow \, \,   \{ \tilde{\beta}_I \leq \beta \leq  \tilde{\beta}_S \} .
\end{equation}
Hence, from (\ref{dieci}), 
$$ P\{ \tilde{\beta}_I < \beta <  \tilde{\beta}_S \} \,\leq  \, P \{ - G^* <  G(\underline Y;\beta) <  G^* \} \, \leq  \,  P \{ \tilde{\beta}_I \leq \beta \leq  \tilde{\beta}_S \} $$
and, by the continuity of $\tilde{\beta}_I$ and $\tilde{\beta}_S,$
$$ P\{ \tilde{\beta}_I < \beta <  \tilde{\beta}_S \} \,=  \, P \{ - G^* <  G(\underline Y;\beta) <  G^* \} \, =   \,  1-\alpha .$$
The variables (\ref{otto}) and (\ref{nove}) are then respectively the lower and the upper bounds of the confidence interval for $\beta,$ for any continuous distribution function $F.$ 

\section{Computation of $\tilde{\beta}$}
In the previous sections, the point estimator for $\beta$ and the bounds of the confidence interval for the same parameter were defined. However, a closed expression for such statistics as functions of the elements of the sample, was not provided.  This fact makes it difficult to study further properties of the considered statistics for a finite value of $N.$ The following section will then deal with the asymptotic distribution of $\tilde{\beta}.$ Before doing that, this section aims at providing an easy scheme to determine the values taken by $\tilde{\beta},$ $\tilde{\beta}_I$ and $\tilde{\beta}_S$ for any given sample realization. 

Let $\underline y = (y_1, \ldots , y_N)$ denote the observations on the response variable corresponding to $x_1, \ldots , x_N$ and suppose computing the ${N \choose 2}$ slopes $P_{ij} = \frac{y_j-y_i}{x_j-x_i},$ $i<j$ which are not necessarily all distinct. Denote with ${}_{(k)}P_{ij}$ the distinct sorted values of such slopes, $k=1, 2, \ldots, r.$ Of course the same slope can correspond to more than a couple of indices $(i,j).$   With the aid of equality (\ref{tre}) in section 3, one can then produce a table displaying, for each row $i=1,2,\ldots, N,$ the ranks of $y_i-b\, x_i$ when $b$ belongs to the possible intervals determined by the slopes ${}_{(k)}P_{ij}.$ To illustrate the construction of such a table, suppose that the observed values are
$$ \begin{array}{c|@{\quad}c@{\quad}c@{\quad}c@{\quad \:}c} 
x_i & 1 & 2 & 3 & 4 \\ \hline
y_i & 2 & 2.5 & 4 & 5 
\end{array} $$
The six possible slopes are
$$ P_{12}=0.5 \quad P_{13}=1 \quad P_{14}=1 \quad P_{23}=1.5 \quad P_{24}=1.25 \quad P_{34}=1, $$
so that the sorted distinct slopes are
$$ {}_{(1)}P_{12} = 0.5 \quad  {}_{(2)}P_{13} =  {}_{(2)}P_{14} =  {}_{(2)}P_{34} = 1 \quad 
 {}_{(3)}P_{24} = 1.25 \quad  {}_{(4)}P_{23} = 1.5. $$
A table with $N=4$ rows can now be produced as follows. First of all a vertical line is built for every slope ${}_{(k)}P_{ij}$ and, on this line, the $i-$th row is marked with a circle and the $j-$th row is marked with a square. For every $h-$th row, one can then put suitable integer values,  starting from $h,$ by adding a unit if a circle is met and by subtracting a unit if a square is met. If more than a single circle or square is met, the number in the previous column will be simply increased by the number of circles and decreased by the number of squares. As an example, on the third row, when passing from the second to the third column, a circle and a square are met; the number on the second column (3) should then  be increased by 1 and decreased by 1, so that   the same value (3) is reported in the third column. 

\begin{center}
\setlength{\unitlength}{1cm}
\begin{picture}(8,8)(0,0)
\put(1,4.1){\line(1,0){5}}
\put(1,5.1){\line(1,0){5}}
\put(1,6.1){\line(1,0){5}}
\put(1,7.1){\line(1,0){5}}
\put(2,3.5){\line(0,1){4.5}}
\put(3,3.5){\line(0,1){4.5}}
\put(4,3.5){\line(0,1){4.5}}
\put(5,3.5){\line(0,1){4.5}}
\put(0.5,4){4}
\put(0.5,5){3}
\put(0.5,6){2}
\put(0.5,7){1}
\put(6.5,4.5){$R(Z_4(b))$}
\put(6.5,5.5){$R(Z_3(b))$}
\put(6.5,6.5){$R(Z_2(b))$}
\put(6.5,7.5){$R(Z_1(b))$}
\put(1.5,4.5){4}
\put(1.5,5.5){3}
\put(1.5,6.5){2}
\put(1.5,7.5){1}
\put(2.5,4.5){4}
\put(2.5,5.5){3}
\put(2.5,6.5){1}
\put(2.5,7.5){2}
\put(3.5,4.5){2}
\put(3.5,5.5){3}
\put(3.5,6.5){1}
\put(3.5,7.5){4}
\put(4.5,4.5){1}
\put(4.5,5.5){3}
\put(4.5,6.5){2}
\put(4.5,7.5){4}
\put(5.5,4.5){1}
\put(5.5,5.5){2}
\put(5.5,6.5){3}
\put(5.5,7.5){4}
\put(1.5,3){$\begin{array}{c} {}_{(1)}P_{12} \\ \scriptstyle{\bf (0.5)} \end{array}$}
\put(2.5,2.6){$\begin{array}{c} {}_{(2)}P_{13} \\ {}_{(2)}P_{14} \\ {}_{(2)}P_{34} \\ \scriptstyle{\bf (1)} \end{array}$}
\put(3.5,3){$\begin{array}{c} {}_{(3)}P_{24} \\ \scriptstyle{\bf (1.25)} \end{array}$}
\put(4.5,3){$\begin{array}{c} {}_{(4)}P_{23} \\  \scriptstyle{\bf (1.5)} \end{array}$}
\put(2,7.1){\circle{0.2}}
\put(3,7.1){\circle{0.2}}
\put(3,7.1){\circle{0.35}}
\put(4,6.1){\circle{0.2}}
\put(5,6.1){\circle{0.2}}
\put(3,5.1){\circle{0.35}}
\put(1.9,6.0){\dashbox{0.2}(0.2,0.2)}
\put(2.9,5.0){\dashbox{0.2}(0.2,0.2)}
\put(4.9,5.0){\dashbox{0.2}(0.2,0.2)}
\put(2.9,4.0){\dashbox{0.2}(0.2,0.2)}
\put(2.825,3.925){\dashbox{0.35}(0.35,0.35)}
\put(3.9,4.0){\dashbox{0.2}(0.2,0.2)}
\end{picture}
\end{center}

\vspace{-2.2cm}

The figures on the $h-$th row of the above table are the ranks of $y_h-b\, x_h$ as long as $b$ ranges in the intervals $(-\infty, {}_{(1)}P),$ $({}_{(1)}P, {}_{(2)}P),$ $({}_{(2)}P, {}_{(3)}P),$ $({}_{(3)}P, {}_{(4)}P),$ $({}_{(4)}P, + \infty).$ For example, 
$$ \begin{array}{ll@{\qquad}rl@{\,}l}
R(Z_2(b)) & = 2 & &b < {}_{(1)}P & = 0.5 \\
& = 1 & 0.5 \leq  &b < {}_{(3)}P & = 1.25 \\
& =2 & 1.25 \leq  &b < {}_{(4)}P & = 1.5 \\
& =3 &&b \geq 1.5 
\end{array} $$
After the above table, two further tables can be produced by computing, for the $i-$th row, the quantities 
$$ |N+1-i-R(Z_i(b))| \qquad \mbox{ and } \qquad |i-R(Z_i(b))|$$
One then gets

\vspace{-0.5cm}

\begin{center}
\setlength{\unitlength}{1cm}
\begin{picture}(12,3.5)(0,0)
\put(1,0.7){\line(1,0){5}}
\put(1,1.4){\line(1,0){5}}
\put(1,2.1){\line(1,0){5}}
\put(1,2.8){\line(1,0){5}}
\put(2,0){\line(0,1){3.5}}
\put(3,0){\line(0,1){3.5}}
\put(4,0){\line(0,1){3.5}}
\put(5,0){\line(0,1){3.5}}
\put(1.5,0.2){\bf 8}
\put(1.5,0.9){3}
\put(1.5,1.6){1}
\put(1.5,2.3){1}
\put(1.5,3.0){3}
\put(2.5,0.2){\bf 8}
\put(2.5,0.9){3}
\put(2.5,1.6){1}
\put(2.5,2.3){2}
\put(2.5,3.0){2}
\put(3.5,0.2){\bf 4}
\put(3.5,0.9){1}
\put(3.5,1.6){1}
\put(3.5,2.3){2}
\put(3.5,3.0){0}
\put(4.5,0.2){\bf 2}
\put(4.5,0.9){0}
\put(4.5,1.6){1}
\put(4.5,2.3){1}
\put(4.5,3.0){0}
\put(5.5,0.2){\bf 0}
\put(5.5,0.9){0}
\put(5.5,1.6){0}
\put(5.5,2.3){0}
\put(5.5,3.0){0}
\put(0.5,0.2){\bf Tot.}
\put(7,0.7){\line(1,0){5}}
\put(7,1.4){\line(1,0){5}}
\put(7,2.1){\line(1,0){5}}
\put(7,2.8){\line(1,0){5}}
\put(8,0){\line(0,1){3.5}}
\put(9,0){\line(0,1){3.5}}
\put(10,0){\line(0,1){3.5}}
\put(11,0){\line(0,1){3.5}}
\put(7.5,0.2){\bf 0}
\put(7.5,0.9){0}
\put(7.5,1.6){0}
\put(7.5,2.3){0}
\put(7.5,3.0){0}
\put(8.5,0.2){\bf 2}
\put(8.5,0.9){0}
\put(8.5,1.6){0}
\put(8.5,2.3){1}
\put(8.5,3.0){1}
\put(9.5,0.2){\bf 6}
\put(9.5,0.9){2}
\put(9.5,1.6){0}
\put(9.5,2.3){1}
\put(9.5,3.0){3}
\put(10.5,0.2){\bf 6}
\put(10.5,0.9){3}
\put(10.5,1.6){0}
\put(10.5,2.3){0}
\put(10.5,3.0){3}
\put(11.5,0.2){\bf 8}
\put(11.5,0.9){3}
\put(11.5,1.6){1}
\put(11.5,2.3){1}
\put(11.5,3.0){3}
\put(6.5,0.2){\bf Tot.}
\end{picture}
\end{center}


The total of every column in the left table above gives the value of 
$$\sum_{1  \leq i \leq N}|N+1-i-R(Z_i(b))|$$
 when $b$ ranges in each interval; similarly, the totals in the right table give the values of 
 $$\sum_{1  \leq i \leq N}|i-R(Z_i(b))|.$$
 Such totals provide an easy computation of the value taken by $\tilde{\beta}.$ Indeed, after noticing that in the  considered example $D=8,$ one gets
 \begin{equation} \label{undici}
\begin{array}{ll@{\qquad}rl@{\,}ll}
G(\underline y;b) & = \frac{8-0}{8} = 1  & &b < 0.5 &\, =\,  b^{(1)} \vspace{0.2cm}\\
& = \frac{8-2}{8} = \frac 34 & 0.5 \leq  &b < 1  & \, =\,   b^{(2)}  \vspace{0.2cm} \\ 
& = \frac{4-6}{8} = -\frac 14 & 1 \leq  &b < 1.5  & \, =\,   b^{(3)}  \vspace{0.2cm} \\
& = \frac{2-6}{8} = -\frac 12 & 1.25 \leq  &b < 1.5  & \, =\,   b^{(4)}  \vspace{0.2cm} \\
& = \frac{0-8}{8} = -1  &  &b \geq  1.5   
\end{array} 
\end{equation}
so that $\tilde{\beta}=1.$ Notice that $\tilde{\beta}=1$ is also the median of the possible slopes, even if this coincidence is not a general rule. The least-squares estimate is $\hat{\beta}=1.05$ instead. 

Concerning the determination of the confidence interval for $\beta$ and thus of the bounds $\tilde{\beta}_I$ and $\tilde{\beta}_S,$ notice that the tables of the distribution of $G$ under indifference provide
$$ P \{ -1 <G(\underline Y;\beta) < 1 \} = P \left\{ -\frac 34 \leq G(\underline Y;\beta) \leq \frac 34 \right\} = \frac{11}{12} \simeq 0.92. $$
By (\ref{undici}), one can then deduce that
$$\begin{array}{rll}
\tilde{\beta}_I  &= \inf \{ b: \, G(\underline y;b) < 1 \}  &= 0.5  = \inf \left\{ b: \, G(\underline y;b) \leq \frac 34 \right\}  \vspace{0.2cm}\\ 
\tilde{\beta}_S  &= \sup \{ b: \, G(\underline y;b) >- 1 \}  &= 1.5  = \sup \left\{ b: \, G(\underline y;b) \geq - \frac 34 \right\}  
\end{array} $$
so that the confidence interval for $\beta$ with level $1-\alpha =92 \%,$ whatever the distribution function $F,$ is 
$$ 0.5 < \beta < 1.5.$$
Notice that the least squares method cannot provide a similar result, without any further assumptions.

\section{The asymptotic distribution of $\tilde{\beta}$} \label{s7}
In order to compare the estimator $\tilde{\beta}$ with the other cited estimators for $\beta,$ some information about its asymptotic distribution is needed. The following theorem, whose proof is found in the Appendix, will be of use 
\begin{theorem} \label{t2}
Let $Y_1, Y_2, \ldots, Y_n $ be independent variables with a common distribution function $F$ and absolutely continuous density $f,$ whose support is $\Re,$ and suppose that
\begin{enumerate}
\item[i)] $I(f) = \int_{-\infty}^{+\infty} \left( \frac{f'}{f} \right)  \, f \, dy < \infty $
\item[ii)] $\int_{-\infty}^{+\infty} |f'| \, dy < \infty $
\end{enumerate}
and that $\displaystyle{\frac{T^2}{M} = \frac 1M \sum_{i=1}^N (x_i - \bar x)^2 \rightarrow +\infty,}$ with $M = \displaystyle{\max_{1 \leq i \leq N} (x_i -\bar x)^2,}$ then
$$ \lim_{N \rightarrow + \infty} P \left\{ \sqrt N \, G \left( \underline Y; \frac bT \right) \leq z \right\} = \phi \left( \frac{z-\sigma_{12}}{\sqrt{2/3}} \right) $$
where $\phi$ denotes the normal cdf with zero mean and unit variance and
\begin{align*}
\sigma_{12} &= 4b \, \int_{-\infty}^{+\infty} \left[ \psi(1-F(y)) - \psi(F(y)) \right] \, f'(y) \, dy \\
\psi(y) & = \lim_{N \rightarrow + \infty} \, \frac{1}{N^{3/2} T} \, \sum_{i=1}^{[Ny]} (x_i - \bar x) (Ny-i) \qquad 0 <y \leq 1
\end{align*}
\end{theorem}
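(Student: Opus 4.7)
The plan is to view $\sqrt{N}\,G(\underline Y;b/T)$ as a rank-based statistic evaluated at a contiguous regression alternative, and to establish asymptotic normality via a Hájek projection. The starting point is to recast definition (\ref{due}) in a form amenable to analysis. Substituting the rank representation $R(Z_i(b))=\sum_{j=1}^{N}\mathbf{1}\{Z_j(b)\le Z_i(b)\}$ into the scores $|N+1-i-R_i|$ and $|i-R_i|$, and expanding each absolute value in indicator form, one writes $G(\underline Y;b/T)$ as a weighted double sum over pairs $(i,j)$ of indicators $\mathbf{1}\{Y_j-Y_i\le(b/T)(x_j-x_i)\}$ with coefficients depending only on $(i,j,N)$. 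This reduces the problem to the large-$N$ behavior of a statistic whose dependence on the data is entirely through pairwise comparisons.

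The mean is then computed by expanding each pairwise probability around its null value. Writing $P\{Y_j-Y_i\le(b/T)(x_j-x_i)\}=\int F\bigl(y+(b/T)(x_j-x_i)\bigr)\,dF(y)$ and Taylor-expanding in the small parameter $(b/T)(x_j-x_i)$, one obtains the constant term $1/2$ plus a first-order correction that, once combined with the score coefficients from the first step and summed by parts, involves $f'$ via integration by parts (legitimate under condition (ii)). Condition (i) gives the integrability needed for the rearrangement, while $T^2/M\to+\infty$ controls the remainder after summation over $O(N^2)$ pairs. Passing to the limit, the partial sums $\frac{1}{N^{3/2}T}\sum_{i=1}^{[Ny]}(x_i-\bar x)(Ny-i)$ are recognized as the Riemann approximations defining $\psi(y)$, and the drift assembles into $\sigma_{12}=4b\int_{-\infty}^{+\infty}[\psi(1-F(y))-\psi(F(y))]\,f'(y)\,dy$.

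For the asymptotic normality I would use the Hájek projection $\hat G=\sum_{i=1}^{N}\E[G\mid Y_i]-(N-1)\,\E[G]$, whose summands are independent, and apply the Lindeberg--Feller CLT. The Noether-type condition $T^2/M\to+\infty$ is precisely what is needed to verify Lindeberg: it forces each $(x_i-\bar x)/T$ to be asymptotically negligible. The limiting variance equals $2/3$ because under the null ($b=0$) the statistic $\sqrt N\,G(\underline Y;0)$ has the classical indifference distribution of Gini's cograduation index, whose asymptotic variance is $2/3$, and the small shift $b/T$ alters only the mean at leading order, leaving the variance untouched. Closing the argument requires showing $N\,\Var(G-\hat G)\to 0$, which follows by bounding the second moments of the residual double sum using conditions (i)--(ii).

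The main obstacle is the first step: the score $|N+1-i-R_i|-|i-R_i|$ is not a standard linear-rank score of the form $\sum c_i\,a(R_i)$ with a single score function, so extracting the $O(1/T)$ drift requires careful combinatorial bookkeeping of the double sum obtained after replacing ranks by indicators. The delicate point is to verify that only the first-order term in the Taylor expansion survives summation, that the interchange of limit and integration against $f'$ is justified, and that the coefficient sums converge to the integral functional defining $\sigma_{12}$ with the function $\psi$ emerging in the precise form stated. Once this expansion is in place, the projection argument and the CLT are comparatively routine, so the whole proof rests on executing that initial decomposition cleanly.
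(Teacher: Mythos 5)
Your plan founders on its very first step, the one you yourself identify as load-bearing. The Gini score $\left| N+1-i-R(Z_i) \right| - \left| i - R(Z_i) \right|$ is a footrule-type score, nonlinear in the rank $R(Z_i)$: substituting $R(Z_i)=\sum_j \mathbf{1}\{Z_j\le Z_i\}$ inside the absolute values does \emph{not} produce a weighted double sum over pairs $(i,j)$ of indicators $\mathbf{1}\{Y_j-Y_i\le (b/T)(x_j-x_i)\}$ with coefficients depending only on $(i,j,N)$. If such a pairwise (Kendall-type) representation existed, $G$ would be asymptotically a weighted $\tau$, which is exactly what it is not (this structural difference between $G$ and $\tau$ is a point the paper itself stresses). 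Consequently the whole drift computation — Taylor-expanding $P\{Y_j-Y_i\le (b/T)(x_j-x_i)\}$ pair by pair and summing the first-order terms into $\sigma_{12}$ — has no object to act on. A direct analysis of $\sqrt N\,G(\underline Y;b/T)$ under $P$ would instead require the distribution of ranks of \emph{non-identically distributed} variables $Z_i(b/T)=Y_i-(b/T)x_i$, i.e.\ moments of a nonlinear rank statistic for a heterogeneous sample; your sketch never engages with that, and the Hájek projection plus Lindeberg step you describe as ``comparatively routine'' is precisely where that difficulty would reappear.

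The paper avoids the alternative-distribution computation altogether by a contiguity argument. It first replaces the ranks by $F(Y_i)$ inside the absolute values, showing $\sqrt N\,(G(\underline Y;0)-\hat G(\underline Y;0))\to 0$ in $P_N$-probability (Lemma 3, via order-statistic moment bounds and Chebyshev); then it proves joint asymptotic normality, under the \emph{null} i.i.d.\ measure $P_N$, of $\sqrt N\,\hat G(\underline Y;0)$ together with the linearized log-likelihood ratio $\frac bT\sum_i (x_i-\bar x)\,f'(Y_i)/f(Y_i)$ (Lemmas 2 and 4, Cramér--Wold plus Lindeberg), the covariance computation being where $\psi$ and $\sigma_{12}$ arise. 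Finally LeCam's third lemma (Lemma 1, quoted from H\'ajek--\v{S}id\'ak) transfers the law of $\sqrt N\,G(\underline Y;0)$ to the shifted measure $Q_N$, and the elementary identity $Q_N\{\sqrt N\,G(\underline Y;0)\le z\}=P_N\{\sqrt N\,G(\underline Y;b/T)\le z\}$ converts that into the statement of the theorem, with the variance staying at $2/3$ and only the mean shifting by $\sigma_{12}$. So the drift is obtained as a covariance with the score statistic under the null, not as a Taylor expansion of pairwise probabilities under the alternative; to salvage your route you would either have to adopt this contiguity scheme or develop an alternative-measure projection theory for footrule-type statistics, neither of which is present in your proposal.
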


\noindent
{\em Remark.}
The quantity 
\begin{equation} \label{dodici}
C = \int_{-\infty}^{+\infty} \left[ \psi(1-F(y)) - \psi(F(y)) \right] \, f'(y) \, dy 
\end{equation}
is negative or null. It suffices to notice that the function 
$$ g(y) = \psi(1-y) - \psi(y) \qquad 0 \leq y \leq 1$$
is non-decreasing and bounded with
$$ g \left( \frac 12 + \xi \right) = - g \left( \frac 12 - \xi \right) \qquad -\frac 12 \leq \xi \leq \frac 12 $$
and that 
$$ \Cov \left\{ F(Y) \, , \, \frac{f'(Y)}{f(Y)} \right\} = \int_{-\infty}^{+\infty} F(y) \, f'(y) \, dy = - \int_{-\infty}^{+\infty} f^2 < 0. $$
One then gets 
$$ C = \int_{-\infty}^{+\infty} g ( F(y) ) \, f'(y) \, dy = \Cov \left\{ g(F(Y)) \, , \, \frac{f'(Y)}{f(y)} \right\} \leq 0 .$$
When $f$ is an even function, in addition, it immediately follows that
\begin{equation} \label{tredici}
C = -2 \int_{-\infty}^{+\infty} \psi(F(y)) \, (f'(y)) \, dy .
\end{equation}
The function $\psi$ may also happen to be identically null for peculiar sequences of the $x_i$'s, so that it is trivially $C=0.$ This chance may arise when the sequence of the $x_i$'s grows ``too fast" wrt $i,$ for example when $x_i=\alpha^i$ with $\alpha >1.$

\begin{theorem}
If the assumptions of Theorem 2 are met and if $C \neq 0,$ then
$$ \lim_{N \rightarrow + \infty} P \left\{ T (\tilde{\beta} - \beta) \leq b \right\} = \phi \left( - \frac{b}{\sqrt{\frac{1}{24 \, C^2}}} \right) $$
\end{theorem}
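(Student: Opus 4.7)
The plan is to translate the event $\{T(\tilde\beta - \beta) \leq b\}$ into an event about the value of the process $G(\underline Y; \cdot)$ at a single rescaled point, and then apply Theorem \ref{t2}. By Proposition 1, the distribution of $\tilde\beta - \beta$ does not depend on $\beta$, so I may assume without loss of generality that $\beta = 0$ and study $P\{\tilde\beta \leq b/T\}$ for fixed $b \in \Re$.

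First, set $\tilde\beta_1 = \sup\{b': G(\underline Y; b') > 0\}$ and $\tilde\beta_2 = \inf\{b': G(\underline Y; b') < 0\}$, so that $\tilde\beta_1 \leq \tilde\beta \leq \tilde\beta_2$ by construction. Using the monotonicity of $G(\underline y; \cdot)$ established in Theorem 1, together with the right-continuous convention fixed in Section 3, I would prove the two-sided inclusion
\begin{equation*}
\{G(\underline Y; b/T) < 0\} \;\subset\; \{\tilde\beta \leq b/T\} \;\subset\; \{G(\underline Y; b/T) \leq 0\}.
\end{equation*}
The left inclusion is immediate: $G(\underline Y; b/T) < 0$ forces $b/T \geq \tilde\beta_2 \geq \tilde\beta$. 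For the right inclusion, $\tilde\beta \leq b/T$ gives $\tilde\beta_1 \leq b/T$, whence monotonicity together with right-continuity at $\tilde\beta_1$ yields $G(\underline Y; b/T) \leq G(\underline Y; \tilde\beta_1) \leq 0$.

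Applying Theorem \ref{t2} with $z = 0$ then gives
\begin{equation*}
P\{\sqrt N \, G(\underline Y; b/T) \leq 0\} \;\longrightarrow\; \phi\!\left(\frac{-\sigma_{12}}{\sqrt{2/3}}\right) = \phi\!\left(\frac{-4bC}{\sqrt{2/3}}\right),
\end{equation*}
and, because the limiting normal law with mean $\sigma_{12}$ and variance $2/3$ is continuous at $0$, the companion probability $P\{\sqrt N \, G(\underline Y; b/T) < 0\}$ converges to the same value. The sandwich above then forces $P\{T\tilde\beta \leq b\} \to \phi(-4bC/\sqrt{2/3})$. The elementary identity $4/\sqrt{2/3} = \sqrt{24}$ rewrites this as the normal CDF with variance $1/(24 C^2)$ evaluated at $b$, matching the claim (the sign being pinned down by the Remark, which shows $C \leq 0$, while the hypothesis $C \neq 0$ is what ensures a non-degenerate limit).

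The principal obstacle is the sandwich step: both the sup/inf-definition of $\tilde\beta_1, \tilde\beta_2$ and the right-continuous extension of $G$ must be used carefully to rule out the intermediate case $G(\underline Y; b/T) = 0$, which in the limit is handled by the continuity at $0$ of the limiting normal law. Once this inclusion is in place, the rest is an immediate application of Theorem \ref{t2} together with a short algebraic identification of the limiting variance.
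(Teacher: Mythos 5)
Your proposal is correct and follows essentially the same route as the paper: reduce to $\beta=0$ via Proposition 1, sandwich the event $\{\tilde\beta \leq b/T\}$ between $\{G(\underline Y;b/T)<0\}$ and $\{G(\underline Y;b/T)\leq 0\}$ using the monotonicity of $G$, and apply Theorem 2 at $z=0$ together with continuity of the limiting normal law. Your version of the sandwich (with $\leq$ in the middle event, justified via $\tilde\beta_1\leq\tilde\beta\leq\tilde\beta_2$ and right-continuity) is in fact a slightly more careful rendering of the paper's one-line implication chain, but the substance is identical.
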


\begin{proof}
According to Proposition 1,  $\beta=0$ can be assumed. As the realizations of $G(\underline Y;b)$ are non-increasing and by (\ref{cinque}), 
$$ \left\{ G(\underline Y; \frac bT) < 0 \right\} \quad \Rightarrow \quad \left\{ \tilde{\beta} < \frac bT \right\} \quad
\Rightarrow \quad \left\{ G\left(\underline Y; \frac bT \right) \leq 0 \right\} ,$$
so that, by Theorem 2,
$$ \lim_{N \Rightarrow + \infty} P \left\{ T \tilde{\beta} < b \right\} = \phi \left( -\frac{\sigma_{12}}{\sqrt{2/3}} \right) = \phi \left( - \frac{b}{\sqrt{\frac{1}{24\, C^2}}} \right) . \quad \qed$$ 
\end{proof}

Theorem 3 assures that, under the stated assumptions, the estimator $\tilde{\beta}$ is asymptotically normally distributed with mean $\beta$ and variance
\begin{equation} \label{quattordici}
\Var (\tilde{\beta}) \simeq \frac{1}{24 \, T^2 \, C^2} .
\end{equation}

\section{Asymptotic relative efficiency of $\tilde{\beta}$}
Some comparisons of the proposed estimator $\tilde{\beta}$ with other known estimators will now be conducted in the very important case  $x_i=i,$ $i=1,2,\ldots , N.$ Comparisons with other kinds of sequences can be produced analogously. 

First of all, notice that, in  the considered case, 
$$ \psi(y) = \frac{1}{\sqrt{12}} (2y^3-3y^2) \qquad 0 \leq y \leq 1. $$
To develop suitable comparisons, the asymptotic relative efficiency (ARE) can be used. As known, this technique compares the sample sizes corresponding to two unbiased estimators having  the same asymptotic variance. More specifically, if two estimators $T_1$ and $T_2,$ both asymptotically unbiased for the same parameter $\theta$ and with variances $\Var (T_1)$ and $\Var(T_2),$ need $n_1$ and $n_2$ observations respectively  to obtain the same variance, then
$$ ARE (T_1, T_2) = \lim_{N \rightarrow + \infty} \frac{n_1}{n_2} =  \lim_{N \rightarrow + \infty} \frac{\Var (T_2)}{\Var (T_1)} $$
For the considered sequence of the $x_i$'s, the least squares estimator $\hat{\beta}$ is known to be asymptotically normally distributed with mean $\beta$ and variance
\begin{equation} \label{quindici}
\Var (\hat{\beta}) \simeq \frac{\sigma^2(F)}{T^2}
\end{equation}
where $\sigma^2(F)$ denotes the population variance depending on $F.$ Hence
\begin{equation} \label{sedici}
ARE(\tilde{\beta}, \hat{\beta} ) = 24 \, \sigma^2(F) \, C^2.
\end{equation}
The asymptotic  efficiency of $\tilde{\beta}$ relative to the Theil's estimator $\beta^*$ can be obtained  using Theorem 6.1 in [5] (p. 1385) which, for the considered sequence of the $x_i$'s, states that $\beta^*$ is asymptotically normally distributed with mean $\beta$ and variance 
\begin{equation} \label{diciassette}
\frac{1}{12 \, T^2 \, B^2}
\end{equation}
where $B = \int f^2.$ One then gets
\begin{equation} \label{diciotto}
ARE (\tilde{\beta}, \beta^*) = 2 \, \frac{C^2}{B^2}.
\end{equation}
It is easy to prove that (\ref{sedici}) and (\ref{diciotto}) are invariant under location and scale shifts. The following propositions are of interest:

\begin{proposition}
For any $F$ possessing finite and positive variance, 
$$ ARE (\tilde{\beta} , \hat{\beta} ) > \frac 89 \left( \frac{\sigma(F)}{\Delta(F)} \right)^2 \geq \frac 23 $$
where $\Delta(F)$ denotes the population mean difference depending on $F.$
\end{proposition}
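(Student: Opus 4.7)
The plan is to split the claim into two inequalities: first, $ARE(\tilde{\beta},\hat{\beta}) = 24\sigma^2(F) C^2 > \frac{8}{9}(\sigma(F)/\Delta(F))^2$, which is equivalent to $C^2 > 1/(27\Delta^2(F))$; and second, the Gini inequality $\sigma^2(F) \geq \frac{3}{4}\Delta^2(F)$. For the assumed sequence $x_i = i$, the function $\psi$ is $\psi(y) = \frac{1}{\sqrt{12}}(2y^3 - 3y^2)$ with derivative $\psi'(y) = -\sqrt{3}\, y(1-y)$; these explicit forms drive the whole computation.

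For the first inequality I would recast $C$ in (\ref{dodici}) by integrating by parts. With $u = \psi(1-F(y)) - \psi(F(y))$ and $dv = f'(y)\, dy$, the boundary terms vanish because $f \to 0$ at $\pm\infty$ (a consequence of $f, f' \in L^1$), while $du = -[\psi'(1-F) + \psi'(F)]\, f\, dy = 2\sqrt{3}\, F(1-F)\, f\, dy$. This yields
$$ C = -2\sqrt{3}\int_{-\infty}^{+\infty} F(y)(1-F(y))\, f^2(y)\, dy. $$
Next, applying the Cauchy--Schwarz inequality to the two factors $f\sqrt{F(1-F)}$ and $\sqrt{F(1-F)}$,
$$\left(\int f\, F(1-F)\, dy\right)^2 \leq \left(\int f^2 F(1-F)\, dy\right)\left(\int F(1-F)\, dy\right),$$
and noting that the substitution $u = F(y)$ gives $\int f F(1-F)\, dy = \int_0^1 u(1-u)\, du = 1/6$, while the standard identity $\Delta(F) = 2\int F(1-F)\, dy$ gives the last factor as $\Delta/2$, one obtains $\int f^2 F(1-F)\, dy \geq 1/(18\Delta)$, hence $C^2 \geq 1/(27\Delta^2)$. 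The inequality is strict since equality in Cauchy--Schwarz would force $f$ to be constant on $\Re$, impossible for a probability density. Multiplying through by $24\sigma^2(F)$ yields the first inequality.

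For the second inequality I would pass to the quantile function $Q = F^{-1}$. Monotonicity of $Q$ gives $\Delta(F) = 2\int\!\!\int_{u<v}(Q(v)-Q(u))\, du\, dv = 4\int_0^1 Q(u)(u - 1/2)\, du$; since $\int_0^1 (u - 1/2)\, du = 0$, the integrand can be recentered to $(Q(u) - \mu)(u - 1/2)$, and Cauchy--Schwarz against $u - 1/2$ yields
$$ \left(\frac{\Delta(F)}{4}\right)^2 \leq \sigma^2(F) \int_0^1 \left(u-\frac{1}{2}\right)^2 du = \frac{\sigma^2(F)}{12}, $$
which is equivalent to $\sigma^2(F) \geq \frac{3}{4}\Delta^2(F)$.

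The main obstacle is the integration-by-parts step: one has to notice that the specific combination $\psi(1-F) - \psi(F)$, together with $\psi'(y) = -\sqrt{3}\, y(1-y)$, conspires to produce the symmetric expression $F(1-F)$, which then yields a factorization amenable to Cauchy--Schwarz. Once this identity is in place, the remaining ingredients, $\int f F(1-F)\, dy = 1/6$ and $\int F(1-F)\, dy = \Delta/2$, are immediate, and the rest of the algebra is routine.
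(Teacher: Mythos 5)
Your proof is correct and follows essentially the same route as the paper: integration by parts turns $C$ into $-2\sqrt{3}\int F(1-F)f^2$, and your Cauchy--Schwarz step with weight $F(1-F)$ is exactly the paper's Jensen-type inequality $\E\{f^2(Y)\}>\E^2\{f(Y)\}$ under the density $\varphi=2F(1-F)/\Delta(F)$, giving the same bound $C^2>1/(27\Delta^2(F))$ (your strictness argument, like the paper's, rests on the support of $f$ being all of $\Re$ as assumed in Theorem 2). The only difference is that you prove $\sigma(F)/\Delta(F)\geq \sqrt{3}/2$ directly via the quantile representation $\Delta(F)=4\int_0^1 Q(u)(u-\tfrac12)\,du$ and Cauchy--Schwarz, whereas the paper simply cites Michetti and Dall'Aglio for this classical bound.
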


\begin{proof}
After integrating by parts, (\ref{dodici}) gives
$$ C^2 = 12 \, \left( \int_{-\infty}^{+\infty} F(y) \, (1-F(y)) \, f^2(y) \, dy \right)^2 .$$
Now notice that
$$\int_{-\infty}^{+\infty} F \, (1-F) \, f^2 \, dy = \frac{\Delta(F)}{2} \, \int  f^2 \, \frac{2\, F\, (1-F)}{\Delta(F)} .$$
The function
$$ \varphi (y) = \frac{2\, F(y)\, (1-F(y))}{\Delta(F)} \geq 0 $$
is such that, by the definition of $\Delta (F),$
$$ \int_{-\infty}^{+\infty} \varphi(y) \, dy = 1 $$
so that it can be considered as a density function. Hence
$$ \int_{-\infty}^{+\infty} F \, (1-F) \, f^2 \, dy = \frac{\Delta(F)}{2} \, \E \{ f^2(Y) \} $$
where $Y$ is a random variable with density $\varphi(y).$ By a trivial inequality, one has then
$$ \E \{ f^2(Y) \} > \E^2 (f(Y)) = \frac{1}{9 \, \Delta^2(F)} $$
and hence
$$ C^2 > \frac{1}{27 \, \Delta^2(F)} .$$
Formula (\ref{sedici}) gives
$$ ARE(\tilde{\beta}, \beta^*) > \frac 89 \, \left( \frac{\sigma(F)}{\Delta(F)} \right)^2$$
and the proof follows by remembering ([14)) that, for any distribution, 
$$\frac{\sigma(F)}{\Delta(F)} \geq \frac{\sqrt 3}{2} .\quad \qed $$
\end{proof}

\begin{proposition}
For any $F,$
$$ ARE (\tilde{\beta}, \beta^*) < \frac 32 .$$
\end{proposition}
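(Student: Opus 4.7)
The plan is to combine the explicit form of $C$ already derived in the proof of Proposition 4 with a trivial pointwise bound. The first step is to express the ARE in closed form. By (\ref{diciotto}), $ARE(\tilde{\beta},\beta^*) = 2\,C^2/B^2$, where $B = \int f^2$ and $C$ is the integral (\ref{dodici}). The proof of Proposition 4 has already shown, for the case $x_i = i$ (where $\psi(y) = (2y^3-3y^2)/\sqrt{12}$), that integration by parts reduces (\ref{dodici}) to
$$ C \;=\; -\sqrt{12}\,\int_{-\infty}^{+\infty} F(y)\,(1-F(y))\,f^2(y)\,\d y, $$
so that
$$ ARE(\tilde{\beta},\beta^*) \;=\; 24\left(\frac{\displaystyle \int_{-\infty}^{+\infty} F(1-F)\,f^2\,\d y}{\displaystyle \int_{-\infty}^{+\infty} f^2\,\d y}\right)^{\!2}. $$

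Next I would apply the elementary pointwise bound $F(y)(1-F(y)) \le 1/4$, valid for every $y$, with equality only where $F(y)=1/2$. Since $F$ is continuous and strictly monotone on the support of $f$ (which by hypothesis is $\Re$), the set on which equality holds is at most a single point, hence $f^2$-null. The positivity of $f$ on $\Re$ therefore promotes the inequality to the strict integrated form
$$ \int_{-\infty}^{+\infty} F(1-F)\,f^2\,\d y \;<\; \frac{1}{4}\int_{-\infty}^{+\infty} f^2\,\d y, $$
and squaring then multiplying by $24$ gives the desired bound
$$ ARE(\tilde{\beta},\beta^*) \;<\; 24\cdot\frac{1}{16} \;=\; \frac{3}{2}. $$

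I do not expect any serious obstacle here: the non-trivial analytic work (the integration by parts that produces the clean expression for $C$, together with the vanishing of the boundary terms under assumptions (i)--(ii) of Theorem \ref{t2}) has already been carried out in Proposition 4. What remains is purely the observation that $1/4$ is the sharp bound for the variance of a Bernoulli with parameter $F(y)$, and that this sharp constant translates directly into the sharp constant $3/2$ for the ARE — sharpness being approachable by densities $f$ that concentrate sharply around the median of $F$, though it is never attained by a proper density, which is exactly why the inequality in the proposition is strict.
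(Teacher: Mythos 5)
Your proof is correct and follows essentially the same route as the paper: both start from $C^2 = 12\bigl(\int F(1-F)\,f^2\bigr)^2$ (obtained by the integration by parts already used for Proposition 4) together with (\ref{diciotto}), and both exploit the bound $F(1-F)\leq \tfrac 14$. The only cosmetic difference is that the paper completes the square, writing $F-F^2=\tfrac 14-\bigl(F-\tfrac 12\bigr)^2$ to get the exact identity $ARE(\tilde{\beta},\beta^*)=\tfrac 32\bigl(1-\int(2F-1)^2 f^2/\int f^2\bigr)^2$, whereas you apply the pointwise bound directly and, usefully, make explicit the strictness argument (the equality set $\{F=\tfrac 12\}$ is $f^2$-null) that the paper leaves implicit.
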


\begin{proof}
One can obtain
\begin{align*}
C^2 &= 12 \left( \int_{-\infty}^{+\infty} (F-F^2) \, f^2 \, dy \right)^2 = \\
&= 12 \left( \frac 14 \, \int f^2 - \int \left( F - \frac  12 \right) ^2 \, f^2  \right)^2 = \\
&= \frac 34 \left( \int f^2 - \int (2F-1)^2 \, f^2 \right)^2 . 
\end{align*}
By using (\ref{diciotto}),
$$ ARE(\tilde{\beta}, \beta^*) = \frac 32 \left( 1- \frac{\int (2F-1)^2 \, f^2}{\int f^2} \right)^2 < \frac 32 . \quad \qed$$
\end{proof}

\vspace{0.5cm}

In the following, the values taken by (\ref{sedici}) and (\ref{diciotto}) will be computed for three specific distributions:
\begin{enumerate}
\item normal
\item double exponential or Laplace
\item Cauchy
\end{enumerate}
which are characterized by a different tail behavior. More specifically, when $|x| \rightarrow + \infty,$ the Cauchy density tends to zero very slowly, in the same manner as $1/x^2;$ the double exponential distribution, instead, has a density tending to zero rather faster than the Cauchy, but more slowly than the normal density. 
 
 \vspace{0.25cm}
\noindent 
1) {\em normal with zero mean and unit variance}

\noindent
By applying (\ref{tredici}), one gets
$$ C = \frac{1}{\sqrt 3} \, \int_0^1 \, (2y^3-3y^2) \, \phi^{-1} (y) \, dy = - \frac{0.3317}{\sqrt 3}$$
being that
\begin{align*}
\int_0^1 y^3 \, \phi^{-1}(y) \, dy &= \int_{-\infty}^{+\infty} z \, \phi^3(z) \, d\phi(z) = 0.2573 \\
\int_0^1 y^2 \, \phi^{-1}(y) \, dy &= \int_{-\infty}^{+\infty} z \, \phi^2(z) \, d\phi(z) = 0.2821 . 
\end{align*}
One has also
$$ B = \int f^2 = \frac{1}{2 \sqrt{\pi}},$$
so that (\ref{sedici}) gives
$$ ARE (\tilde{\beta}, \hat{\beta}) = 8 (0.3317)^2 \simeq 0.88 $$
and (\ref{diciotto}) gives
$$ ARE (\tilde{\beta}, \beta^*) = \frac{8\, \pi}{3} \, (0.3317)^2 \simeq 0.93. $$
Hence, in the normal case the least squares estimator is better than both  $\tilde{\beta}$ and $\beta^*,$ even if none of the latter two estimators shows a substantial loss of efficiency. 

 \vspace{0.25cm}
\noindent 
2) {\em Double exponential}

\noindent
In this case, 
$$ f(y) = \frac 12 \exp \{ - |y| \} \qquad -\infty < y < + \infty, $$
so that $\sigma^2(F)=2.$ After some more computations, one gets
$$ C = - \frac{5}{\sqrt 3 \, 2^4} \qquad \mbox{ and } \qquad B = \frac 14.$$
Hence
$$ ARE (\tilde{\beta}, \hat{\beta}) = \frac{25}{16} \simeq 1.56; \qquad \qquad  ARE (\tilde{\beta}, \beta^*) = \frac{25}{24} \simeq 1.05. $$

 \vspace{0.25cm}
\noindent 
2) {\em Cauchy}

\noindent
Obviously this is an extreme case, because the density
$$ f(y) = \frac{1}{\pi} \, \frac{1}{1-y^2} \qquad -\infty < y < + \infty $$
does not possess finite variance, so  the least squares estimator is not consistent. By definition, one has then
$$ ARE (\tilde{\beta}, \hat{\beta}) = + \infty . $$
However, it makes sense to compare $\tilde{\beta}$ and $\beta^*$. This task results again in favor of $\tilde{\beta}$. Some tedious but trivial computations indeed give 
$$ C = - \frac{\sqrt 3}{2 \, \pi} \left( \frac 13 + \frac{1}{\pi^2} \right); \qquad B = \frac{1}{2 \, \pi} $$
and
$$ ARE (\tilde{\beta}, \beta^*) = 6 \, \left( \frac 13 + \frac{1}{\pi^2} \right)^2 \simeq 1.13.$$

\vspace{0.5cm}
The above results clearly show that  the asymptotic  efficiency of $\tilde{\beta}$ relative to $\hat{\beta},$ but also to $\beta^*,$ tend to grow as distributions with more and more heavy tails are considered.  

\section*{Appendix}
To prove Theorem \ref{t2} of section \ref{s7}, some preliminary results will be considered. Let $f$ be a probability density function with support in $\Re$ and define the two probability measures
$$ P_N(A) = \int_A \, \prod_{1 \leq i \leq N} f(y_i) \quad \mbox{ and } \quad
Q_N(A) = \int_A \, \prod_{1 \leq i \leq N} f \left( y_i + \frac{b}{T} (x_i-\bar x) \right) $$
where $x_1 < x_2 < \ldots < x_N$ as usual, $b \neq 0$ is finite, $A$ is any event and
$$ T^2 = \sum_{1 \leq i \leq N} (x_i - \bar x)^2 \qquad M = \max_{1 \leq i \leq N} (x_i - \bar x)^2 .$$

\begin{lemma} \label{lemma1app}
{\rm ([15], p. 208) ([16, p. 1134])}\\
If the vector
$$ \left( \sqrt N \, G(\underline Y;0) \: , \: \log \frac{\displaystyle{\prod_{1 \leq i \leq N}} \, f \left( Y_i + \frac{b}{T} (x_i - \bar x) \right) }{\displaystyle{\prod_{1 \leq i \leq N}} f(Y_i)} \right) $$ 
converges, with the measure $P_N$, to the normal distribution with parameters
$$ \left( \mu \, , \,  - \frac 12 \sigma_2^2 \, , \, \sigma_1^2 \, , \, \sigma_2^2 \, , \, \sigma_{12} \right) $$
then the variable 
$$ \sqrt N \, G(\underline Y;0)$$
converges, with the measure $Q_N$, to the normal distribution with mean $\mu + \sigma_{12}$ and variance $\sigma_1^2.$
\end{lemma}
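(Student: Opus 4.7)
The statement is (essentially) Le Cam's third lemma, and the particular values $-\tfrac12\sigma_2^2$ and $\sigma_2^2$ appearing as asymptotic mean and variance of the log-likelihood ratio are the classical signature of contiguity. My plan is therefore to first invoke Le Cam's first lemma to conclude that $\{Q_N\}$ is contiguous with respect to $\{P_N\}$, and then to transfer the asymptotic distribution of $\sqrt N\, G(\underline Y;0)$ from $P_N$ to $Q_N$ by a characteristic-function computation based on the Radon--Nikodym identity $dQ_N/dP_N = e^{L_N}$.

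The starting identity is
$$\E_{Q_N}\left[\exp\{it\sqrt N\, G(\underline Y;0)\}\right] = \E_{P_N}\left[\exp\{it\sqrt N\, G(\underline Y;0) + L_N\}\right],$$
where $L_N$ denotes the log-likelihood ratio. Under the joint convergence hypothesis, the pair $(\sqrt N\, G(\underline Y;0), L_N)$ has, under $P_N$, a bivariate normal limit with means $(\mu, -\tfrac12\sigma_2^2)$, variances $(\sigma_1^2,\sigma_2^2)$ and covariance $\sigma_{12}$. A routine moment-generating-function computation in this Gaussian limit gives
$$\exp\{it\mu - \tfrac12 \sigma_2^2 + \tfrac12(-t^2\sigma_1^2 + 2it\sigma_{12} + \sigma_2^2)\} = \exp\{it(\mu+\sigma_{12}) - \tfrac12 t^2 \sigma_1^2\},$$
which is the characteristic function of $N(\mu+\sigma_{12},\sigma_1^2)$; the conclusion then follows from L\'evy's continuity theorem.

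The delicate point -- and what I expect to be the main obstacle -- is precisely the passage to the limit in the above identity, since the integrand $\exp\{L_N\}$ is unbounded and weak convergence alone does not license the interchange of limit and expectation. The standard remedy is to exploit uniform integrability of $\{e^{L_N}\}$ under $P_N$: each $e^{L_N}$ is nonnegative with $\E_{P_N}[e^{L_N}] \leq 1$, and its weak limit $e^Y$ with $Y\sim N(-\sigma_2^2/2,\sigma_2^2)$ is log-normal of mean $1$, so Scheff\'e's lemma yields $L^1$-convergence. Combining this with the weak convergence of the joint vector and the boundedness of the factor $\exp(it\sqrt N\, G)$, a Vitali-type argument delivers convergence of the expectations and hence of the characteristic functions. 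The rigorous handling of this uniform-integrability step is precisely what the cited references [15], [16] provide.
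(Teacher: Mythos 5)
Your proposal is correct, and in fact the paper offers no proof of this lemma at all: it is quoted as a known result (Le Cam's third lemma) with citations to H\'ajek--\v{S}id\'ak [15] and H\'ajek [16], whose argument is essentially the one you reconstruct -- contiguity from the first lemma, the identity $\E_{Q_N}\{e^{itT_N}\}=\E_{P_N}\{e^{itT_N+L_N}\}$ with $T_N=\sqrt N\,G(\underline Y;0)$, a Gaussian moment computation in the limit, and a uniform-integrability (Vitali) justification of the passage to the limit. The only point worth making explicit is that the Radon--Nikodym identity is exact only when $Q_N\ll P_N$ (here guaranteed in substance because $f$ has support $\Re$); in the general statement one must add the $Q_N$-mass of the set where the $P_N$-density vanishes, and it is precisely the contiguity you obtain from Le Cam's first lemma that makes this singular contribution asymptotically negligible, so your argument goes through unchanged.
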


\begin{lemma} \label{lemma2app}
{\rm ([15], p. 213), ([16], p. 1136).}\\
If 
$ \displaystyle{I(f) = \int \left( \frac{f'}{f} \right)^2  f  < \infty}$ and if 
$\displaystyle{\frac{T^2}{M} \rightarrow \infty },$
then
\begin{eqnarray*}
\lefteqn{
{P_N\lim}_{N \rightarrow \infty} \Bigg( \log \frac{\displaystyle{\prod_{1 \leq i \leq N}} f \left( 
Y_i + \frac{b}{T} (x_i-\bar x) \right)}{\displaystyle{\prod_{1 \leq i \leq N}} \, f(Y_i)} - \frac bT \sum_{1 \leq i \leq N}
(x_i - \bar x) \frac{f'(Y_i)}{f(Y_i)} } && \\
&& \hspace{8.5cm} + \frac b2 \, I(f) \Bigg) =0 
\end{eqnarray*}
where $P_N \lim$ denotes the limit in $P_N$--probability.
\end{lemma}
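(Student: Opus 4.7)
My plan is to prove the statement as a standard local asymptotic normality (LAN) expansion for the contiguous sequence of shifts $Y_i \mapsto Y_i + \delta_{i,N}$ with $\delta_{i,N} = (b/T)(x_i - \bar x)$. I would carry this out via the Hájek--LeCam $L^2$-differentiability of $\sqrt{f}$, which is the standard route whenever Fisher information is finite.

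First, I would record the triangular-array facts
$$ \sum_{1 \leq i \leq N} \delta_{i,N}^2 = \frac{b^2}{T^2} \sum_{1 \leq i \leq N} (x_i - \bar x)^2 = b^2, \qquad \max_{1 \leq i \leq N} \delta_{i,N}^2 \leq \frac{b^2 M}{T^2} \to 0, $$
so the individual shifts are uniformly negligible while their squared sum is bounded. Next, I would invoke the standard consequence of conditions (i) and (ii) that under finite Fisher information and absolute continuity of $f$, the map $t \mapsto \sqrt{f(\cdot + t)}$ is differentiable in $L^2(\mathrm{d}y)$ with derivative $\tfrac{1}{2} f'/\sqrt{f}$, i.e.\
$$ \int_{-\infty}^{+\infty} \Bigl( \sqrt{f(y+\delta)} - \sqrt{f(y)} - \tfrac{\delta}{2} \tfrac{f'(y)}{\sqrt{f(y)}} \Bigr)^2 \, \mathrm{d}y = o(\delta^2) \qquad (\delta \to 0). $$

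With this tool I would set $U_{i,N} = \sqrt{f(Y_i + \delta_{i,N})/f(Y_i)} - 1$ and write each log-density ratio as $2 \log (1 + U_{i,N})$. A preliminary step would be to check that $\max_i |U_{i,N}| \xrightarrow{P_N} 0$, using $E_{P_N}[U_{i,N}^2] = \int (\sqrt{f(y + \delta_{i,N})} - \sqrt{f(y)})^2 \, \mathrm{d}y = O(\delta_{i,N}^2)$ together with $\max_i \delta_{i,N}^2 \to 0$, which legitimizes the pointwise expansion $2\log(1 + u) = 2u - u^2 + O(u^3)$ on the event $\{\max_i |U_{i,N}| \leq \epsilon\}$. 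Then I would split the sum into a linear part $\sum_i 2 U_{i,N}$ and a quadratic part $-\sum_i U_{i,N}^2$, plus a negligible cubic remainder controlled uniformly by $\max_i |U_{i,N}|$ times $\sum_i U_{i,N}^2$.

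The linear part I would rewrite, via $L^2$-differentiability, as $\sum_i \delta_{i,N} f'(Y_i)/f(Y_i)$ up to a zero-mean remainder whose variance equals $4 \sum_i \int r(y,\delta_{i,N})^2 \, \mathrm{d}y$ with $r$ from the display above; by uniform $o(\delta_i^2)$ control this variance is $o(\sum_i \delta_{i,N}^2) = o(1)$, so Chebyshev closes this step. For the quadratic part, a Chebyshev/second-moment computation using $E[U_{i,N}^2] = \tfrac{1}{4} I(f) \delta_{i,N}^2 + o(\delta_{i,N}^2)$ and the smallness of the individual $U_{i,N}^2$ shows that $\sum_i U_{i,N}^2$ concentrates at its mean, which converges to $\tfrac{b^2}{4} I(f)$. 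Finally, the bias correction $E_{P_N}[\sum_i 2 U_{i,N}] = -\sum_i \int (\sqrt{f(y+\delta_{i,N})} - \sqrt{f(y)})^2 \, \mathrm{d}y \to -\tfrac{b^2}{4} I(f)$, combined with the linear and quadratic limits, produces the claimed expansion in $P_N$-probability.

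The main obstacle, I expect, will be the bookkeeping in the remainder control: one must simultaneously (a) verify that the Taylor expansion holds on a $P_N$-high-probability event, (b) show the cubic correction is $o_{P_N}(1)$ rather than just bounded, and (c) handle the non-vanishing mean of $\sum_i 2 U_{i,N}$ so that the centering of the score and the quadratic bias align correctly. The triangular-array Lindeberg-type condition $T^2/M \to \infty$ is exactly what makes all these terms collapse uniformly.
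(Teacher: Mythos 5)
The paper itself offers no proof of this lemma: it is imported, with the references given in its statement, as LeCam's ``second lemma'' from H\'ajek--\v{S}id\'ak (p.~213) and H\'ajek (1962, p.~1136); the only original material attached to it is the subsequent Remark, which verifies the Lindeberg--Feller condition for the score sum $\frac bT\sum_i (x_i-\bar x)\,f'(Y_i)/f(Y_i)$ for later use in Lemma 4. So there is no in-paper argument to compare line by line; your LAN expansion via quadratic-mean ($L^2$) differentiability of $\sqrt f$ is essentially the proof in the cited sources and the standard modern treatment, so the route is the right one. It also has the side benefit of making transparent that the correct centering is $\frac{b^2}{2}I(f)$: the ``$\frac b2\,I(f)$'' in the displayed statement is a typo, consistent with the limiting mean $-\frac{b^2}{2}I(f)$ appearing in Lemmas 1 and 4. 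Note too that Lemma 2 as stated assumes only $I(f)<\infty$ and $T^2/M\to\infty$; the condition $\int|f'|<\infty$ belongs to Theorem 2, and your argument correctly uses only absolute continuity plus finite Fisher information.

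Two steps, as you state them, need repair. First, $\max_i|U_{i,N}|\to 0$ in $P_N$-probability does not follow from $E[U_{i,N}^2]=O(\delta_{i,N}^2)$ together with $\max_i\delta_{i,N}^2\to 0$ alone: the union bound gives $\sum_i \epsilon^{-2}E[U_{i,N}^2]=O(b^2)$, which is bounded but not small. The standard fix is to split $U_{i,N}=\frac{\delta_{i,N}}{2}\frac{f'(Y_i)}{f(Y_i)}+W_{i,N}$, use $\sum_i E[W_{i,N}^2]=o(1)$ from quadratic-mean differentiability to kill the remainder part, and control the maximum of the score part by a Lindeberg/truncation argument driven by $M/T^2\to 0$ --- precisely the computation the paper performs in its Remark to this lemma. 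Second, applying Chebyshev directly to $\sum_i U_{i,N}^2$ presupposes fourth moments that need not exist; the usual remedy is to truncate, i.e.\ show $\sum_i E\bigl[U_{i,N}^2\,;\,|U_{i,N}|>\epsilon\bigr]\to 0$ (again via the score/remainder split) and then apply the triangular-array weak law of large numbers to the truncated squares, so that $\sum_i U_{i,N}^2\to\frac{b^2}{4}I(f)$ in $P_N$-probability. With these two standard patches, your outline yields the lemma (with $\frac{b^2}{2}I(f)$ in place of $\frac b2 I(f)$) in full.
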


\noindent
{\em Remark to Lemma \ref{lemma2app}}
\\
According to the measure $P_N,$ the variable 
$$ \frac bT \sum_{1 \leq i \leq N} (x_i - \bar x) \, \frac{f'(Y_i)}{f(Y_i)} $$
has the following variance
$$ \mbox{Var} \left(  \frac bT \sum_{1 \leq i \leq N} (x_i - \bar x) \, \frac{f'(Y_i)}{f(Y_i)}  \right) = b^2 \, I(f) < \infty $$
and expectation ([16], p. 1125)
$$ \mbox{E} \left(  \frac bT \sum_{1 \leq i \leq N} (x_i - \bar x) \, \frac{f'(Y_i)}{f(Y_i)}  \right) = 0 .$$
Moreover, the variable
$$ \sum_{1 \leq i \leq N} Z_{2i} = \sum_{1 \leq i \leq N} \frac{\displaystyle{\frac bT  \, (x_i - \bar x) \,  \frac{f'(Y_i)}{f(Y_i)}}}{\sqrt{b^2 I (f)}} $$
satisfies the Lindeberg-Feller condition. Indeed, after defining
$$ Z_{2i} (\delta) = Z_{2i} \, \, s \left( |Z_{2i}| - \delta \right) $$
where $\delta >0$ and $s(x)$ equals 1 if $x\geq 0$ and $0$ elsewhere, such a condition can be written as
$$ \sum_{1 \leq i \leq N} \mbox{E} \left( Z_{2i}^2 (\delta) \right) =  \sum_{1 \leq i \leq N} \int_{|z| \geq \delta}
z^2 \, d P_N \left\{ \frac bT  (x_i - \bar x) \frac{f'(Y_i)}{f(Y_i} \leq z \, \sqrt{b^2 \, I(f)} \right\} \rightarrow 0 .$$
However, by putting $t = z \, \sqrt{b^2 \, I(f)},$ one gets
\begin{eqnarray*}
\lefteqn{\sum_{1 \leq i \leq N} \frac{1}{b^2 \, I(f)} \, \int_{|t| \geq \delta \sqrt{b^2 \, I(f)}} \, t^2 \, d P_N \left\{ 
\frac{f'(Y_i)}{f(Y_i}  \leq t \right\} = } && \\
&& = \frac{1}{T^2} \,  \sum_{1 \leq i \leq N} \frac{(x_i- \bar x)^2}{I(f)} \int_{|y| \geq \delta \sqrt{I(f)} \left| \frac{T}{x_i - \bar x} \right|} \, y^2 \, d P_N \left\{ 
\frac{f'(Y_i)}{f(Y_i}  \leq y \right\} \rightarrow  0
\end{eqnarray*}
because, by hypothesis, 
$$ \frac{T^2}{M} \rightarrow + \infty \quad \Rightarrow \quad \left| \frac{T}{x_i- \bar x} \right| \Rightarrow + \infty.$$

\begin{lemma} \label{lemma3app}
If $F'=f$ and if 
$$ \hat G(\underline Y;0) = \frac{2N}{D} \sum_{1 \leq i \leq N} \left( \left| 1 - \frac iN - F(Y_i) \right| - \left| \frac iN - F(Y_i) \right| \right) $$
then 
$$ {P_N \lim}_{N \rightarrow \infty} \, \sqrt N \, (G(\underline Y;0) - \hat G(\underline Y;0)) = 0. $$
\end{lemma}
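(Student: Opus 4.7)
I propose to reduce $G(\underline Y;0) - \hat G(\underline Y;0)$ to a single sum over $i$, Taylor-expand each term using the piecewise-linear structure, and bound the main Taylor term and the Taylor remainder separately. Both will turn out to be $O_{P_N}(1)$; since the overall prefactor $(2N/D)\sqrt N \sim 2/\sqrt N$, this will yield the claimed $o_{P_N}(1)$.

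Put $U_i := F(Y_i)$, iid Uniform$(0,1)$ under $P_N$. Since $F$ is strictly increasing on the support of $f$, $R(Y_i)/N = F_N(U_i)$, the empirical c.d.f.\ of the $U_j$'s evaluated at $U_i$. Factoring $N$ out of each absolute value in $G$ and absorbing the mismatch between $N+1-i$ and $N(1-i/N)$ into a per-summand error of size $1/N$ gives
$$
G(\underline Y;0) - \hat G(\underline Y;0) \;=\; \frac{2N}{D}\sum_{i=1}^N \Delta_i + O(1/N),\qquad \Delta_i := h_i(F_N(U_i)) - h_i(U_i),
$$
with $h_i(v) := |1 - i/N - v| - |i/N - v|$. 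Since $D \sim N^2$, it suffices to show $\sum_i \Delta_i = O_{P_N}(1)$.

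Each $h_i$ is $2$-Lipschitz, piecewise linear, with break points only at $v = i/N$ and $v = 1-i/N$, slope $\pm 2$ on the middle interval and $0$ outside. Setting $W_i := F_N(U_i) - U_i$ and writing $\Delta_i = h_i'(U_i)W_i + E_i$, the remainder $E_i$ satisfies $|E_i| \le 2|W_i|$ and is nonzero only when $U_i$ lies within distance $|W_i|$ of a break point. Dvoretzky--Kiefer--Wolfowitz gives $\sup_v|F_N(v) - v| = O_{P_N}(N^{-1/2})$; on the event $\sup_i|W_i| \le C/\sqrt N$, the expected number of $i$ with $U_i$ within $C/\sqrt N$ of $i/N$ or $1-i/N$ is $O(\sqrt N)$, and each such index contributes $|E_i| = O(N^{-1/2})$, so $\sum_i|E_i| = O_{P_N}(1)$.

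For the main term $S := \sum_i h_i'(U_i)W_i$, expand $W_i = N^{-1}\sum_j[\mathbf{1}(U_j \le U_i) - U_i]$ to get $S = N^{-1}\sum_{i,j}X_{ij}$ with $X_{ij} := h_i'(U_i)[\mathbf{1}(U_j \le U_i) - U_i]$. The diagonal $i=j$ contributes $O_{P_N}(N^{-1/2})$ by a direct variance bound. Every off-diagonal $\Cov(X_{ij},X_{kl})$ with exactly one shared index vanishes: conditioning on the shared $U$ makes the conditional mean of one of the two centered indicators equal to zero. The only surviving off-diagonal contributions are (a) matched unordered pairs $\{i,j\}=\{k,l\}$ ($O(N^2)$ pairs, each of bounded covariance) and (b) $j=l$, $i\ne k$, which after conditioning on $U_j$ equals $\sum_j \E[(\sum_i A_i(U_j))^2 - \sum_i A_i(U_j)^2]$ with $A_i(v) := \E[X_{ij}\mid U_j=v]$. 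A piecewise computation yields $\sum_i A_i(v) = O(1)$ uniformly in $v$: the middle-region slope-$(-2)$ contributions from $i \le N/2$ cancel to leading order with the slope-$(+2)$ contributions from $i > N/2$. Hence (b) also contributes only $O(N^2)$, so $\Var(\sum_{i,j}X_{ij}) = O(N^2)$, $S = O_{P_N}(1)$ by Chebyshev, and combining with the remainder gives $\sum_i \Delta_i = O_{P_N}(1)$; thus $\sqrt N(G(\underline Y;0) - \hat G(\underline Y;0)) = O_{P_N}(N^{-1/2}) = o_{P_N}(1)$. The principal obstacle is precisely the cancellation $\sum_i A_i(v) = O(1)$: without it the case-(b) covariances would grow like $O(N^3)$, forcing $\Var(\sum_{i,j}X_{ij}) = O(N^3)$ and yielding only $\sqrt N(G - \hat G) = O_{P_N}(1)$, which is insufficient.
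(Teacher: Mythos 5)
Your proposal is sound in substance, but it is not the paper's argument, so a comparison is in order. At the level of the decomposition the two proofs are parallel: your split of $\sqrt N\,(G-\hat G)$ into the $1/N$-shift error, the linearized term $\sum_i h_i'(U_i)W_i$ and the breakpoint-crossing remainder $\sum_i E_i$ corresponds to the paper's four terms $D_N$, $C_N$ and $A_N,B_N$ obtained from the identity $|x|=2x\,s(x)-x$; there $s(i-R(Y_i))-s(N+1-i-R(Y_i))$ plays the role of your slope $h_i'$, and $A_N,B_N$ are nonzero exactly when $R(Y_i)$ and $N F(Y_i)$ straddle a breakpoint. Where you genuinely diverge is in how the pieces are controlled. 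The paper never invokes Dvoretzky--Kiefer--Wolfowitz and never expands the rank into a sum of indicators: it computes expectations exactly from the joint law of $(Y_i,R(Y_i))$ (e.g.\ the closed form $i(N-i)/(2N^2(N+1))$ for the $B_N$-summands) and bounds $\Var(C_N)$ through the exact moments and the $O(1/N)$ covariances of the uniform order statistics $F(Y_{(r)})$, so the off-diagonal sum is $O(1)$ with no delicate cancellation, and the prefactor $16N^3/D^2\sim 16/N$ finishes the job. Your route, by writing $W_i=N^{-1}\sum_j(\mathbf 1(U_j\le U_i)-U_i)$, creates an $O(N^3)$ family of case-(b) covariances and therefore stands or falls with the cancellation $\sum_i A_i(v)=O(1)$; this is true, and in fact exact, since $h_i$ and $h_{N-i}$ have the same middle interval with opposite slopes, so $A_i(v)+A_{N-i}(v)=0$ and only $O(1)$ unpaired terms remain --- but this is the one step you merely assert, and it should be written out. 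Two smaller slips do not affect the conclusion: your blanket claim that every covariance with exactly one shared index vanishes is contradicted by your own case (b) $j=l$ (the conditional-mean argument needs at least one of the two indicator indices $j,l$ to be unshared, which fails precisely when $j=l$); and the diagonal terms $X_{ii}=h_i'(U_i)(1-U_i)$ have nonzero means $2i/N-1$, so ``a direct variance bound'' needs the remark that these means sum to $O(1)$ (or simply that $|N^{-1}\sum_i X_{ii}|\le 2$ already suffices). What each approach buys: yours is shorter at the remainder step (DKW plus a counting argument instead of the paper's order-statistics integrals) and makes transparent where the essential cancellation sits, as your closing remark correctly stresses; the paper's is more self-contained, trading that cancellation for exact finite-$N$ moment formulas, and it delivers $\E|A_N|,\E|B_N|,\E|D_N|\to 0$ and $\Var(C_N)\to 0$, i.e.\ $L^1$/$L^2$ control of the error terms rather than bounds in probability only.
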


\begin{proof}
 By using the identity
 $ |x| = 2x \, s(x) - x $ $(x \in \Re),$ the definition in (\ref{due}) and the expression of $\hat G(\underline Y; 0),$ one gets 
 $$ \sqrt N (G(\underline Y;0) - \hat G (\underline Y;0)) = A_N + B_N + C_N + D_N $$
 where
 \begin{align*}
 A_N & =\frac{4 \, N^{3/2}}{D} \, \sum_{1 \leq i \leq N} \left( 1- \frac iN - F(Y_i) \right) \left[ s(N+1-i-R(Y_i)) \right. \\
 & \hspace{8cm} \left. - s(N-i-N\, F(Y_i)) \right] \\
 B_N & =- \frac{4 \, N^{3/2}}{D} \, \sum_{1 \leq i \leq N} \left(\frac iN - F(Y_i) \right)
 \left[ s(i-R(Y_i)) - s(i-N\, F(Y_i)) \right] \\
 C_N & =\frac{4 \, N^{3/2}}{D} \, \sum_{1 \leq i \leq N} \left( \frac{R(Y_i)}{N} - F(Y_i) \right) \left[ s(i - R(Y_i)) - s(N+1-i-R(Y_i)) \right] \\
 D_N & =\frac{2 \, N^{1/2}}{D} \, \sum_{1 \leq i \leq N}  s(N+1-i-R(Y_i)) .
 \end{align*}
 It follows that
 \begin{align*}
 \mbox{E} \{ |D_N| \} & \leq \frac{2 \, N^{3/2}}{D} \, \rightarrow 0 \\
 \mbox{E} \{ |B_N| \} & \leq \frac{4 \, N^{3/2}}{D} \sum_{1 \leq i \leq N} \mbox{E} \left\{
 \left| \frac iN - F(Y_i) \right| \, \, \left| s(i-R(Y_i)) - s(i-NF(Y_i)) \right| \right\}
 \end{align*}
 By using the joint distribution of $(Y_i, R(Y_i)),$ that is
 \begin{eqnarray*}
 \lefteqn{ \mbox{Pr} \{ R(Y_i) = r \, ; \, y < Y_i < y+dy \} = } &&\\
 && = \frac 1N \, g_{Y_{(r)}} (y) \, dy = \\
 && 	= \frac 1N \frac{N!}{(N-r)! \, (r-1)!} \,  [F(y)]^{r-1} \, [1-F(y)]^{N-r} \, f(y) \, dy 
 \quad r=1,2, \ldots, N; y \in \Re , 
 \end{eqnarray*}
 one gets
 \begin{eqnarray*}
 \lefteqn{
 \mbox{E} \left\{
 \left| \frac iN - F(Y_i) \right| \, \, \left| s(i-R(Y_i) - s(i-NF(Y_i)) \right| \right\} = } &&\\
 && = \frac 1N \sum_{1 \leq r \leq N} \int_{- \infty}^{+ \infty} 
 \left| \frac iN - F(y) \right| \, \, \left| s(i-r) - s(i-NF(y)) \right| \, g_{Y_{(r)}} (y) \, dy = \\
 && = \sum_{1 \leq r \leq N} \int_0^1 
 \left| \frac iN - v \right| \, \, \left| s(i-r) - s(i-Nv) \right|
 {N-1 \choose r-1} \, v^{r-1} \, (1-v)^{N-r} \, dv = \\
  && =\int_0^1 \left| \frac iN - v \right| \left[ \left( 1- s(i-Nv) \right) \, \Pr \{ 
 U_{(i)} > v \} + s(1-Nv) \, \Pr \{ U_{(i)} \leq r \} \right] \, dv 
 \end{eqnarray*}  
 where $U_{(i)}$ is the $i-$th order statistic of a $(N-1)-$sized random sample drawn from a uniform population in $(0,1).$ By partitioning the integration interval, after some trivial passages, one gets
 \begin{eqnarray*}
 \lefteqn{
 \mbox{E} \left\{
 \left| \frac iN - F(Y_i) \right| \, \, \left| s(i-R(Y_i) - s(i-NF(Y_i)) \right| \right\} = } &&\\
 && = \frac{i^2}{2 \, N^2} - \frac iN \, \int_0^1 \Pr \{ U_{(i)} >v \} \, dv + \int_0^1 v \, \, \Pr \{ 
 U_{(i)} >v \} \, dv = \\
 && = \frac{i^2}{2 \, N^2} - \frac i N \, \mbox{E} \left( U_{(i)} \right) + \frac 12 \, \mbox{E}
 \left( U_{(i)}^2 \right) = \\
 && = \frac{i \, (N-i)}{2 \, N^2 (N+1)} \qquad \qquad 1 \leq i \leq N 
 \end{eqnarray*}  
so that
$$ \mbox{E} \{ |B_N| \} \leq \frac{2 \, N^{3/2}}{D} \, \sum_{1 \leq i \leq N} \frac{i(N-i)}{N^2(N+1)} \quad\rightarrow 0 .$$
 By following similar steps, one can prove that
 $$ \mbox{E} \{ |A_N| \} \rightarrow 0 . $$
 Now let 
 $$ S_{i,N} = \left( \frac{R(Y_i)}{N} - F(Y_i) \right) \left[ s ( i - R(Y_i) ) - s(N+1-i-R(Y_i)) \right] \qquad i =1, \ldots , N .$$
 and simply consider that
 $$ \mbox{E} (C_N) = \frac{4 \, N^{3/2}}{D} \, \sum_{1 \leq i \leq N} \mbox{E} \left( S_{i,N} \right)=0 $$
 Moreover, 
 \begin{equation} \label{asterisco}
 \mbox{Var} (C_N) = \frac{16 \, N^3}{D^2} \sum_{1 \leq i \leq N} \mbox{E} \left( S_{i,N}^2 \right) + \frac{16 \, N^3}{D^2} \sum_{i \neq j} \, \mbox{E} \left( S_{i,N} \, S_{j,N} \right) 
 \end{equation}
 and
 \begin{eqnarray*}
 \sum_{1 \leq i \leq N} \mbox{E} (S_{i,N}^2) &=& \sum_{1 \leq i \leq N} \frac 1N \sum_{1 \leq r \leq N} 
 \int_{-\infty}^{+ \infty} \left( \frac rN - F(y) \right)^2   \cdot \\
 && \hspace{3cm} \cdot \, \left[ s(i-r) -s(N+1-i-r) \right]^2 \, g_{Y_{(r)}} (y) \, dy \\
 & \leq & \sum_{1 \leq r \leq N} \int_{-\infty}^{+\infty} \left( \frac rN - F(y) \right)^2 \, 
 g_{Y_{(r)}}(y)\,dy \\  
 & \leq & \sum_{1 \leq r \leq N} \mbox E \left\{ \left( F(Y_{(r)}) -\frac rN \right)^2 \right\} \, \rightarrow \, 
 A < +\infty
 \end{eqnarray*}
 being that
 $$ \mbox E \left\{ \left( F(Y_{(r)}) - \frac rN  \right)^2 \right\} = \frac{r(N-r+1)}{(N+2)\, (N+1)^2} + 
 \frac{r^2}{N^2 \, (N+1)^2} \qquad \forall r = 1,2, \ldots, N $$
 The first summand in the rhs of (\ref{asterisco}) thus tends to zero. Moreover,
 \begin{eqnarray*}
 \lefteqn{ \left| \sum_{i \neq j} \mbox E (S_{i,N} \, S_{j,N}) \right|  = } && \\
 && = \left| \frac{1}{N(N-1)} \, \sum_{i \neq j} \, \sum_{r \neq k} \int_{-\infty}^{+\infty}
 \int_{-\infty}^{+\infty} \left( \frac rN - F(x) \right) \, \left( \frac kN - F(y) \right) \right. \\
 && \hspace{2cm} 
 \left( s(i-r) - s(N+1-i-r) \right) \, \left( s(j-k) - s (N+1-j-k) \right) \\
 && \hspace{7.5cm} g_{Y_{(r)},Y_{(k)}}
(x,y) \, dx \, dy \Big| = \\  
&& = \left| \frac{1}{N \, (N-1)} \sum_{r \neq k} \left[ \int_{-\infty}^{+\infty} \int_{-\infty}^{+\infty}
\left( \frac rN - F(x) \right) \, \left( \frac kN - F(y) \right) \, g_{Y_{(r)}, Y_{(k)}} (x,y) \right. \right. \\
&& \hspace{0.5cm} dx \, dy \Big] \sum_{i \leq i \leq N} \left( s(i-r) -s(N+1-i-r) \right) \, \left( s(i-k) - s(N+1-i-k) \right) \big| .
 \end{eqnarray*}
 Now, as
 \begin{eqnarray*}
 \lefteqn{ \int_{-\infty}^{+\infty} \int_{-\infty}^{+\infty}
\left( \frac rN - F(x) \right) \, \left( \frac kN - F(y) \right) \, g_{Y_{(r)}, Y_{(k)}} (x,y) \, dx \, dy = } \\
&& = \mbox{Cov} \left\{ F(Y_{(r)}) \, , \, F(Y_{(k)}) \right\} + \frac{rk}{N^2 \, (N+1)^2} = \\
&& = \left\{ 
\begin{array}{ll}
\dfrac{r(N+1-k)}{(N+2)\, (N-1)^2} + \dfrac{rk}{N^2 \, (N+1)^2} & \hspace{1.5cm} r<k \\
\dfrac{k(N+1-r)}{(N+2)\, (N+1)^2} + \dfrac{rk}{N^2 \, (N+1)^2} & \hspace{1.5cm} r>k
\end{array} \right. 
\end{eqnarray*}
one obtains
 \begin{eqnarray*}
 \lefteqn{ \left| \sum_{i \neq j} \mbox{E} (S_{i,N} \, S_{j,N}) \right| \leq} \\
 && \leq \frac{1}{N-1} \sum_{r \neq k} \left\{ \mbox{Cov} (F(Y_{(r)}) \, , \, F(Y_{(k)}) +
 \frac{rk}{N^2 \, (N+1)^2} \right\} \, \rightarrow \, B < + \infty 
 \end{eqnarray*}
 so that  the second summand in (\ref{asterisco}) tends to zero as well. The proof follows then by applying Thchebycheff's inequality in its suitable form to the four variables. $\qed$
\end{proof}
 
\noindent
{\em Remark to Lemma \ref{lemma3app}}
\\
Lemma \ref{lemma3app} makes it possible to obtain the asymptotic distribution of Gini's cograduation index under indifference in an alternative way with respect to a former paper ([17]). Indeed, Lemma \ref{lemma3app} assures that $\sqrt N \,  G(\underline Y;0)$ is asymptotically equally distributed as $\sqrt N \, \hat G (\underline Y; 0),$ for which the classical limit theorems can be applied, because it can be regarded as a sum of independent variables. As a matter of fact, the variable
$$ \sqrt N \, \hat G(\underline Y; 0) = \frac{2 \, N^{3/2}}{D} \, \sum_{1 \leq i \leq N}
\left( \left| 1-\frac iN - F(Y_i) \right| - \left| \frac iN - F(Y_i) \right| \right) $$ 
has the following mean and variance
$$ \mbox{E} (\sqrt N \, \hat G(\underline Y;0)) = 0 \qquad 
\mbox{Var} (\sqrt N \, \hat G(\underline Y;0)) \simeq \frac 23 $$
and, by letting
$$ \sum_{1 \leq i \leq N} \dfrac{\frac{2 \, N^{3/2}}{D} \, \left( | 1 - \frac iN - F(Y_i) |
- | \frac iN - F(Y_i) | \right)}{\sqrt{\frac 23}} = \sum_{1 \leq i \leq N} Z_{1i} $$
and, for every $\delta >0,$
$$ Z_{1i} (\delta) = Z_{1i} \: s(|Z_{1i} - \delta|),$$
the Lindeberg condition is satisfied:
$$ \sum_{1 \leq i \leq N} \mbox{E} (Z_{1i}^2 (\delta)) \, \rightarrow \, 0 \qquad \forall \delta >0 $$

\begin{lemma} \label{lemma4app}
If $F' = f,$ $I(f) < +\infty,$ $\int |f'| < + \infty$ and $\frac{T^2}{M} \rightarrow + \infty,$
then the vector
$$ \left( \sqrt N \, G (\underline Y;0) \, , \, \log \frac{\displaystyle{\prod_{1 \leq i \leq N}} \, f 
\left( Y_i + \frac bT (x_i - \bar x) \right) }{\displaystyle{\prod_{1 \leq i \leq N}} f(Y_i)} \right) $$
converges in distribution, with the measure $P_N,$ to the bivariate normal with parameters
$$ \left( 0, \quad -\frac{b^2}{2} I(f), \quad \frac 23 , \quad b^2 I(f), \quad \sigma_{12} \right) $$
where
\begin{align*}
\sigma_{12} & = 4b \, \int_{-\infty}^{+\infty} \left[ \psi(1-F(y)) - 
\psi(F(y)) \right] \, f'(y) \, dy \\
\psi(y) & = \lim_{N \rightarrow + \infty} \, \frac{1}{N^{3/2} \, T} \, \sum_{i=1}^{[Ny]} (x_i - \bar x) (Ny-i) \qquad \qquad 0 <y \leq 1 
\end{align*}
\end{lemma}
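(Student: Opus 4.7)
The plan is to use Lemmas \ref{lemma2app} and \ref{lemma3app} to replace both coordinates by sums of \emph{independent} centered random variables (up to deterministic shifts), and then deduce the bivariate normal limit via the Cram\'er--Wold device and the Lindeberg--Feller theorem. Lemma \ref{lemma3app} lets me substitute $\sqrt N\,\hat G(\underline Y;0)$ for $\sqrt N\,G(\underline Y;0)$ at the cost of a $P_N$-negligible term, while Lemma \ref{lemma2app} lets me substitute
$$
L_N\;=\;\frac{b}{T}\sum_{1\le i\le N}(x_i-\bar x)\,\frac{f'(Y_i)}{f(Y_i)}\;-\;\frac{b^2}{2}\,I(f)
$$
for the log-likelihood ratio. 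By Slutsky, it therefore suffices to prove joint weak convergence of $\bigl(\sqrt N\,\hat G(\underline Y;0),\,L_N\bigr)$ to the announced Gaussian law.

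For the Cram\'er--Wold step I form, for arbitrary $\alpha_1,\alpha_2\in\Re$, the linear combination $\alpha_1\sqrt N\,\hat G(\underline Y;0)+\alpha_2\bigl(L_N+\tfrac12 b^2 I(f)\bigr)=\sum_{i=1}^N(\alpha_1 U_{i,N}+\alpha_2 V_{i,N})$, where
$$
U_{i,N}=\frac{2N^{3/2}}{D}\Bigl(\bigl|1-\tfrac iN-F(Y_i)\bigr|-\bigl|\tfrac iN-F(Y_i)\bigr|\Bigr),\qquad V_{i,N}=\frac{b}{T}(x_i-\bar x)\frac{f'(Y_i)}{f(Y_i)},
$$
are centered and independent across $i$. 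The Lindeberg condition for the combination follows, via the inequality $(a+b)^2\le 2a^2+2b^2$ inside the truncation, from the two marginal Lindeberg conditions already verified in the Remarks to Lemmas \ref{lemma3app} and \ref{lemma2app}, which also supply the asymptotic variances $\Var\bigl(\sum_i U_{i,N}\bigr)\to 2/3$ and $\Var\bigl(\sum_i V_{i,N}\bigr)=b^2 I(f)$ needed to match $\sigma_1^2$ and $\sigma_2^2$. Everything then reduces to identifying the covariance $\sigma_{12}=\lim_N\sum_i E(U_{i,N}V_{i,N})$.

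This covariance computation is the main obstacle. Using $D\sim N^2$ and exchanging sum and integral (legitimate because $\int|f'|<\infty$ and the integrand bracket is bounded by $1$), I rewrite
$$
\sum_{i=1}^N E(U_{i,N}V_{i,N})\;=\;2b\!\int_{-\infty}^{+\infty}\!\!f'(y)\,h_N\bigl(F(y)\bigr)\,dy+o(1),
$$
where $h_N(v)=\frac{1}{N^{1/2}T}\sum_{i=1}^N(x_i-\bar x)\bigl(|1-i/N-v|-|i/N-v|\bigr)$. Splitting each absolute value at $i=[Nv]$ and $i=[N(1-v)]$, then using $\sum_i(x_i-\bar x)=0$ together with Abel summation, converts the two halves of $h_N$ into pieces of the form $\frac{1}{N^{3/2}T}\sum_{i=1}^{[Ny]}(x_i-\bar x)(Ny-i)$, whose limits equal $\psi(y)$ by definition; the common contribution $\frac{1}{N^{3/2}T}\sum_{i=1}^N i(x_i-\bar x)$ cancels between the two halves, leaving $h_N(v)\to 2[\psi(1-v)-\psi(v)]$ pointwise on $(0,1)$. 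The Cauchy--Schwarz bound $\sum_i|x_i-\bar x|\le T\sqrt N$ gives $|h_N|\le 2$, so dominated convergence with integrable dominant $2|f'|$ yields $\sigma_{12}=4b\int[\psi(1-F(y))-\psi(F(y))]f'(y)\,dy$.
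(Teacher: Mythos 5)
Your proposal is correct and follows essentially the same route as the paper: reduce via Lemmas \ref{lemma2app} and \ref{lemma3app} to the pair $\bigl(\sqrt N\,\hat G(\underline Y;0),\ \tfrac bT\sum_i(x_i-\bar x)f'(Y_i)/f(Y_i)\bigr)$, identify $\sigma_{12}$ by computing the covariance and passing to the limit under the integral (where $\sum_i(x_i-\bar x)=0$ kills the constant terms, exactly as in the paper), and obtain joint normality through linear combinations and the Lindeberg condition. The only differences are cosmetic: you justify the interchange of limit and integral by domination (which the paper leaves implicit) and you verify the Lindeberg condition for the combination directly — where the cross-terms are harmless since $|U_{i,N}|\le 2N^{3/2}/D=O(N^{-1/2})$ — whereas the paper delegates that last step to H\'ajek--\v{S}id\'ak.
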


\begin{proof}
By following lemmas \ref{lemma2app} and \ref{lemma3app}, it suffices to show that the vector 
$$ \left( \sqrt{N} \, \hat G (\underline Y;0) \, , \, \frac bT \sum_{1 \leq i \leq N} (x_i -\bar x) \dfrac{f'(Y_i)}{f(Y_i)} \right) $$
converges in distribution to the bivariate normal with parameters
$$ \left( 0, \quad 0, \quad \frac 23 , \quad b^2 I(f), \quad \sigma_{12} \right) .$$
By the remarks following Lemma \ref{lemma2app} and Lemma \ref{lemma3app}, the limiting distribution  surely takes the first  four parameters listed above. Moreover, consider that
\begin{eqnarray*}
\lefteqn{\mbox{Cov} \left\{ \sqrt N \hat G (\underline Y;0)\, , \, \frac bT \sum_{1 \leq i \leq N} (x_i - \bar x) \dfrac{f'(Y_i)}{f(Y_i)} \right\} = } \\
&& = \frac{2 \, N^{3/2}}{D \, T} \, b \, \int_{-\infty}^{+\infty} f'(y) \, 
\left[ \sum_{1 \leq i \leq N} (x_i - \bar x) \left( \left| 1 - \frac iN - F(y) \right| - \left| \frac iN - F(y) \right| \right) \right] \, dy = \\
&& = 4b \, \int_{- \infty}^{+ \infty} f'(y) \left[ \frac{N^{3/2}}{D \, T} \left( \sum_{i=1}^{[N\, (1-F(y))]} 
(x_i - \bar x) \left( 1-F(y) - \frac iN \right) + \right. \right. \\
&& \hspace{6cm} 
- \left. \left. \sum_{i=1}^{[N\, F(y)]} (x_i - \bar x) \left( F(y) - \frac iN \right) \right) \right] \, dy 
\end{eqnarray*}
By passing to the limit (with $N$) under the integral sign, one gets
\begin{align*}
\sigma_{12} & = \lim_{N \rightarrow + \infty} \mbox{Cov} 
\left\{ \sqrt N \hat G (\underline Y;0)\, , \, \frac bT \sum_{1 \leq i \leq N} (x_i - \bar x) \dfrac{f'(Y_i)}{f(Y_i)} \right\} = \\
& = 4b \, \int_{- \infty}^{+ \infty} \left[ \psi(1-F(y)) - \psi(F(y)) \right] \, f'(y) \, dy = \\
& = 4b \, \int_0^1 \left[ \psi(1-v) - \psi(v) \right] \, \dfrac{f'(F^{-1}(v))}{f(F^{-1}(v))} \, dv
\end{align*}
To prove that the limiting distribution is normal, one can then show that, for every real $\lambda_1$ and $\lambda_2,$ the following variable is asymptotically normally distributed:
$$  \lambda_1 \, \sqrt{N} \, \hat G (\underline Y;0)+ \lambda_2 \,  \frac bT \sum_{1 \leq i \leq N} (x_i -\bar x) \dfrac{f'(Y_i)}{f(Y_i)} .$$
However, as both the variables
$$ \lambda_1 \, \sqrt{N} \, \hat G (\underline Y;0) \quad \mbox{ and } \quad 
  \lambda_2 \,  \frac bT \sum_{1 \leq i \leq N} (x_i -\bar x) \dfrac{f'(Y_i)}{f(Y_i)} $$
satisfy the Lindeberg condition, one can get the aimed result as in [15], page 218. $\qed$
\end{proof}

\vspace{0.5cm}

The proof of Theorem \ref{t2} in section \ref{s7}  now immediately follows from lemmas \ref{lemma1app} and \ref{lemma4app}. Indeed, for every real $z,$
\begin{align*}
Q_N \left\{ \sqrt N \, G(\underline Y;0) \leq z \right\} & = \int_{\left\{ \sqrt N \, G(\underline y ; 0) \leq z \right\}} \prod_{1 \leq i \leq N} \, f \left( y_i + \frac bT (x_i - \bar x) \right) = \\
& = \int_{\left\{ \sqrt N \, G \left(\underline y ; \frac bT \right) \leq z \right\}} \prod_{1 \leq i \leq N} \, 
f(y_i) = \\
& = P_N \left\{ \sqrt N \, G \left(\underline Y ; \frac bT \right) \leq z \right\} 
\end{align*}
and, by lemmas \ref{lemma1app} and \ref{lemma4app},
$$ \lim_{N \rightarrow + \infty} Q_N \left\{ \sqrt N \, G(\underline Y;0) \leq z \right\} = \phi \left( \frac{z-\sigma_{12}}{\sqrt{2/3}} \right) .$$

\newpage
\section*{References}

\noindent
[1] \textsc{F. Eicker}, \textit{Asymptotic normality and consistency of least squares estimators for families of linear regressions.} Ann. of Math. Stat., \textit{34}, 1963, pp. 447--456.
\vspace{0.25cm}

\noindent
[2] \textsc{A. M. Mood}, \textit{An introduction to the theory of statistics.} McGraw-Hill Book Co., New York,  1950.
\vspace{0.25cm}

\noindent
[3] \textsc{J. N. Adichie}, \textit{Estimators of regression parameters based on rank tests.} Ann. of Math. Stat., \textit{38}, 1967, pp. 894--904.
\vspace{0.25cm}

\noindent
[4] \textsc{H. Theil}, \textit{A rank-invariant method of linear and polynomial regression analysis, I, II, III.} Nederl. Akad. Westensch. Proc., \textit{53}, 1950, pp. 386--392, 521--525, 1397--1412.
\vspace{0.25cm}

\noindent
[5] \textsc{P. K. Sen}, \textit{Estimates of regression coefficient based on Kendall's tau.} Journ. Amer. Stat. Ass., \textit{63}, 1968, pp. 1379--1389.
\vspace{0.25cm}

\noindent
[6] \textsc{R. J. Beran}, \textit{On distribution-free statistical inference with upper and lower probabilities.} Ann. of Math. Stat., \textit{42}, 1971, pp. 157--158.
\vspace{0.25cm}

\noindent
[7] \textsc{H. L. Koul}, \textit{Asymptotic behavior of Wilcoxon type confidence regions in multiple linear regression.} Ann. of Math. Stat., \textit{40}, 1969, pp. 1950--1979.
\vspace{0.25cm}

\noindent
[8] \textsc{R. V. Hogg} and \textsc{R. H. Randles}, \textit{Adaptive distribution free regression methods and their applications.} Technometrics, \textit{17}, 1975, pp. 399--407.
\vspace{0.25cm}

\noindent
[9] \textsc{P. Muliere}, \textit{Una nota intorno al coefficiente di correlazione tra l'indice G di cograduazione di Gini e l'indice} $\tau$ \textit{di Kendall.} Giornale degli Economisti e Annali di Economia, 1976, pp. 627--633.
\vspace{0.25cm}

\noindent
[10] \textsc{A. Herzel}, \textit{Sulla distribuzione campionaria dell'indice di cograduazione del Gini.} Metron, \textit{30}, 1972, pp. 137--153.
\vspace{0.25cm}

\noindent
[11] \textsc{C. Gini}, \textit{Sul criterio di concordanza tra due caratteri.} Atti del Reale Istituto Veneto di Scienze, Lettere ed Arti, 1915-16, pp. 309--331.
\vspace{0.25cm}

\noindent
[12] \textsc{W. Hoeffding}, \textit{A non parametric test of independence.} Ann. of Math. Stat., \textit{19}, 1948, pp. 546--557.
\vspace{0.25cm}

\noindent
[13] \textsc{J. L. Hodges} and \textsc{E. L. Lehmann}, \textit{Estimates of location based on rank tests.} Ann. of Math. Stat., \textit{34}, 1963, pp. 598--611.
\vspace{0.25cm}

\noindent
[14] \textsc{B. Michetti} and \textsc{G. Dall'Aglio}, \textit{La differenza semplice media.} Statistica, \textit{17}, 1957, pp. 159--255.
\vspace{0.25cm}

\noindent
[15] \textsc{J. H\'{a}jek} and \textsc{Z. \v{S}id\'{a}k}, \textit{Theory of rank tests.} Academic Press, London,  1962.
\vspace{0.25cm}

\noindent
[16] \textsc{J. H\'{a}jek}, \textit{Asymptotically most powerful rank-order tests.} Ann. of Math. Stat., \textit{33}, 1962, pp. 1124--1147.
\vspace{0.25cm}

\noindent
[17] \textsc{D. M. Cifarelli} and \textsc{E. Regazzini}, \textit{On a distribution-free test of independence based on Gini's rank association coefficient.} Recent developments in Statistics. Proceedings of the European Meeting of Statisticians (Grenoble, 6-11 sept. 1976), North-Holland, Amsterdam, 1977, pp. 375--385.
\vspace{0.25cm}

\end{document}